\newcommand{\md}{\!\!\!\!\mod}
\newcommand{\lb}{\llbracket}
\newcommand{\rb}{\rrbracket}
\newcommand{\pdec}{$\mathsf{PDec}$\ }
\newcommand{\tdec}{$\mathsf{TDec}$\ }
\newtheorem{definition}{Definition}
\newtheorem{theorem}{Theorem}
\newtheorem{lemma}{Lemma}
\definecolor{myblue}{HTML}{0000CC}
\definecolor{reecolor}{RGB}{245, 245, 245} % CP颜色 (诚实但好奇)
\definecolor{teecolor}{RGB}{215, 215, 215} % TEE颜色 (完全可信)
\definecolor{whitecolor}{RGB}{253, 253, 253} % 白色 (非常浅的灰色)
\begin{document}

\title{TRUST: A Toolkit for TEE-Assisted Secure Outsourced Computation over Integers}

\author{Bowen Zhao, ~\IEEEmembership{Member,~IEEE},
        Jiuhui Li, 
        Peiming Xu$^*$, 
        Xiaoguo Li,
        Qingqi Pei, ~\IEEEmembership{Senior Member,~IEEE}, \\
        and Yulong Shen, ~\IEEEmembership{Member,~IEEE}
\thanks{Bowen zhao is with Guangzhou Institute of Technology, Xidian University, Guangzhou 510555, China and also with the State Key Laboratory of Public Big Data, Guizhou University, Guiyang 550025, China (e-mail: bwinzhao@gmail.com).

Jiuhui Li is with Guangzhou Institute of Technology, Xidian University, Guangzhou 510555, China (e-mail: lijiuhui@stu.xidian.edu.cn).

Peiming Xu is with Electric Power Research Institute, CSG, Guangzhou Guangdong, 510663, China, and also with Guangdong Provincial Key Laboratory of Power System Network Security, Guangzhou Guangdong, 510663, China (email: xupm@csg.cn).

Xiaoguo Li is with the College of Computer Science, Chongqing University, Chongqing 400044, China (e-mail: csxgli@cqu.edu.cn).

Qingqi Pei and Yulong Shen are with the State Key Laboratory of Integrated Service Networks, Xidian University, Xi’an 710126, China (e-mail: qqpei@mail.xidian.edu.cn, ylshen@mail.xidian.edu.cn).

(Corresponding author: Peiming Xu)}
}

% The pars
\markboth{Journal of \LaTeX\ Class Files,~Vol.~~, No.~~, ~~202$\times$}%
{Shell \MakeLowercase{\textit{et al.}}: A Sample Article Using IEEEtran.cls for IEEE Journals}

% \IEEEpubid{0000--0000/00\$00.00~\copyright~2021 IEEE}
% Remember, if you use this you must call \IEEEpubidadjcol in the second
% column for its text to clear the IEEEpubid mark.

\maketitle

\begin{abstract}
Secure outsourced computation (SOC) provides secure computing services by taking advantage of the computation power of cloud computing and the technology of privacy computing (e.g., homomorphic encryption). Expanding computational operations on encrypted data (e.g., enabling complex calculations directly over ciphertexts) and broadening the applicability of SOC across diverse use cases remain critical yet challenging research topics in the field. Nevertheless, previous SOC solutions frequently lack the computational efficiency and adaptability required to fully meet evolving demands. To this end, in this paper, we propose a toolkit for TEE-assisted (Trusted Execution Environment) SOC over integers, named TRUST. In terms of system architecture, TRUST falls in a single TEE-equipped cloud server only through seamlessly integrating the computation of REE (Rich Execution Environment) and TEE. In consideration of TEE being difficult to permanently store data and being vulnerable to attacks, we introduce a (2, 2)-threshold homomorphic cryptosystem to fit the hybrid computation between REE and TEE. Additionally, we carefully design a suite of SOC protocols supporting unary, binary and ternary operations. To achieve applications, we present \texttt{SEAT}, secure data trading based on TRUST. Security analysis demonstrates that TRUST enables SOC, avoids collusion attacks among multiple cloud servers, and mitigates potential secret leakage risks within TEE (e.g., from side-channel attacks). Experimental evaluations indicate that TRUST outperforms the state-of-the-art and requires no alignment of data as well as any network communications. Furthermore, \texttt{SEAT} is as effective as the \texttt{Baseline} without any data protection.
\end{abstract}

\begin{IEEEkeywords}
Secure computing, privacy protection, TEE, homomorphic encryption, data trading.
\end{IEEEkeywords}

\section{Introduction}
\IEEEPARstart{O}{utsourcing} computation allows a user with limited resources to outsource operations (or rather, computations) over data to a cloud server. Secure outsourced computation (SOC) ensures computations over data without disclosing the data itself \cite{shan2018practical}. In other words, SOC provides not only convenient computing services for the user with limited resources, but also privacy protection for the outsourced data. Broadly speaking, many types of secure operations can be considered as SOC, for instance, searchable encryption \cite{bosch2014survey}, private set intersection \cite{chen2017fast}, secure training or inference for machine learning \cite{rathee2020cryptflow2}, and secure computation as a service \cite{zhao2024pega}. Thus, SOC can be applied to various fields including search, product recommendation, artificial intelligence, production optimization, etc.

Technically, a cloud server providing SOC services usually adopts privacy computing technologies to safeguard the privacy of outsourced data. Homomorphic encryption (HE), secure multi-party computation (MPC), and trusted execution environment (TEE) \cite{platform2013global} are common privacy computing technologies. HE allows operations over encrypted data (also known as ciphertexts) directly and accesses no decryption key \cite{zhao2023soci}. HE includes partially HE achieving addition or multiplication only and fully HE supporting addition and multiplication simultaneously. MPC enables multiple parties jointly performing an operation and safeguards each party's inputs \cite{lindell2020secure}. SOC solutions based on multiple servers usually rely on MPC. Either HE or MPC achieves SOC through software method, while TEE is through a hardware method. TEE creates an enclave offering hardware-based memory encryption for code and data, in which no unauthorized entities from outside the enclave access inside code or data \cite{li2023survey}. TEE offers a promising hardware-based approach to secure computation, achieving CPU-level performance and effectively bridging the gap between academic research and industrial adoption in the field.

Existing SOC solutions remain constrained by limited computational operations and often assume no collusion among cloud servers. On the one hand, SOC based on HE usually supports secure addition, secure multiplication, or both. However, in practice, a user requires more operations (e.g., secure comparison) offered by SOC to handle outsourced data. On the other hand, although SOC based on MPC offers more secure operations, it requires at least two cloud servers and assumes that there is no collusion between cloud servers. In this case, more cloud servers mean more deployment costs. Additionally, more network communications and extra data alignment operations are required to enable secure computations. Furthermore, arguably, the assumption of no collusion may be too strong, as it is difficult to determine whether cloud servers launch the collusion attack or not. SOC based on TEE enriches computational operations and avoids the collusion attack launched by multiple cloud servers \cite{liu2020lightning}. Unfortunately, any enclave holds limited storage space, and its storage is random access memory (RAM) \cite{sabt2015trusted}. Thus, it is not trivial to permanently store and process outsourced data in an enclave. Furthermore, TEE is not always trusted because side-channel attacks are high likely to expose inside code and data in the enclave \cite{wu2022hybrid}.

In order to mitigate the above limitations faced by SOC, in this work, we concentrate on devising a hybrid SOC framework by integrating advantages from the software-based method and the hardware-based method. Intuitively, this framework is confronted with several challenges. The \textbf{first} one is to balance the software method and the hardware method, and then eliminate collusion attacks. The software-based method and the hardware-based method follow entirely different designs achieving SOC. A basic operation is supported by the former but not by the latter, and vice versa. The \textbf{second} one is to fulfill secure computations under lacking a fully trusted enclave. Due to the degradation of trust in enclave, some excellent features provided by fully trusted enclave cannot be adopted. Thus, this significantly limits the functionality of the enclave. The \textbf{last} one is to enrich secure operations falling this framework. To eliminate the collusion attack, this framework should not utilize MPC to enable SOC, which restricts secure operations. Moreover, the enclave in this framework fails to enrich secure operations by reason of trust concerns. Worse, HE-based SOC only achieves simple secure operations.

In response to the aforementioned challenges, in this work, we propose TRUST\footnote{TRUST: a toolkit for \textbf{T}EE-assistant secu\textbf{R}e o\textbf{U}t\textbf{S}ourced compu\textbf{T}ation over integer}, a toolkit for TEE-assisted SOC over integer. In brief, the proposed TRUST employs a single TEE-equipped cloud server, in which a rich execution environment (REE) of the cloud server cooperates with TEE to provide SOC services. A (2, 2)-threshold Paillier cryptosystem \cite{zhao2024soci} is introduced in TRUST to construct SOC protocols. After that, we come up with secure data trading based on TRUST, called \texttt{SEAT}\footnote{\texttt{SEAT}: \texttt{SE}cure data tr\texttt{A}ding based on \texttt{T}RUST} to explore the application of SOC. Our innovations in this work involve three fronts.
\begin{itemize}
\item \textbf{Practical SOC framework}. We propose a practical SOC framework that relies on a single TEE host, where the TEE host calls a software-based method (i.e., the threshold Paillier cryptosystem \cite{zhao2024soci}), and the enclave of TEE offers hardware-based capabilities. The TEE host and the enclave serve as the two parties of SOC and jointly perform SOC protocols. Particularly, the proposed SOC framework does not require a fully trusted enclave.
\item \textbf{Rich SOC operations}. Following the proposed framework and assumption, we carefully design a suite of underlying SOC protocols encompassing unary, binary and ternary operations.
\item \textbf{Feasible SOC application}. We apply the proposed TRUST to data trading and present a solution \texttt{SEAT}, which not only fulfills the outsourced task requirements but also effectively avoids data resale and privacy leakage.

\end{itemize}

The rest of this work is organized as follows. We briefly review related work in Section \ref{Related Work}. Subsequently, Section \ref{Preliminarys} presents the preliminaries essential for constructing TRUST. After that, we present the system model and threat model of TRUST and elaborate on its design in Section \ref{System Model and Threat Model} and Section \ref{TRUST Design}, respectively. In Section \ref{SEAT Details} and Section \ref{Security Analysis}, the application and the security of TRUST are given, respectively. We extensively evaluate the efficiency and effectiveness of both TRUST and \texttt{SEAT} in Section \ref{Experimental Evaluation}. Finally, we conclude this work in Section \ref{Conclusion}.

\section{Related Work}\label{Related Work}
The methods for achieving SOC can be broadly classified into the software-based (e.g., HE, MPC) one and the hardware-based (e.g., TEE) one. In this section, we briefly review these two approaches.

\subsection{Software-based SOC}
Recently, significant advancements in cryptographic primitives and protocols have expanded secure computation capabilities, enabling diverse operations on ciphertexts (e.g., arithmetic, logical) and supporting various numeric representations (e.g., integers, floating-point numbers) \cite{shan2018practical} \cite{yang2019comprehensive}. Despite continuous optimization and integration of technologies to enhance privacy and performance, achieving a balance between security and efficiency remains a fundamental challenge in privacy-preserving outsourced computation.

% work \cite{elmehdwi2014secure} 缺点 sk完全给了服务器，然后是用随机数掩码来处理的
% 这一段要写单架构下的HE方案，限制了非线性操作 
HE-based solutions (e.g., \cite{erkin2012generating} \cite{samanthula2013efficient} \cite{elmehdwi2014secure} \cite{veugen2015secure}) enable direct computation on ciphertexts. Erkin \textit{et al}. \cite{erkin2012generating} implemented personalized recommendation generation over ciphertexts by jointly using two additively homomorphic cryptosystems, including the Paillier cryptosystem \cite{paillier1999public} and Damgård-Geisler-Krøigaard (DGK) \cite{damgaard2007efficient}. The work \cite{elmehdwi2014secure} proposed a novel S$k$NN protocol, which is also based on the Paillier cryptosystem to enable $k$-nearest neighbor search over ciphertexts in the cloud, providing key protocols such as secure multiplication, etc., to accomplish this goal. Although the Paillier cryptosystem has the capability to extend operation types over ciphertexts, some of solutions (e.g., \cite{elmehdwi2014secure}, \cite{veugen2015secure}) suffer from stability issues, as they rely on entrusting the complete decryption key to a single decryption server, which introduces a critical single point of failure. While, other solutions (e.g., somewhat HE \cite{damgaard2012multiparty} and fully HE \cite{gentry2009fully}) either lack support for arbitrary computations or remain impractical due to substantial computational overhead.

% 双架构下的HE方案，比较强烈的非共谋假设以及开销问题
The integration of a twin-server architecture with HE has emerged as a novel paradigm, supporting arbitrary operations while mitigating the single point of security failure. The schemes in \cite{zhao2023soci}, \cite{zhao2024soci}, \cite{schoenmakers2015universally}, \cite{liu2016privacy} adopted the Paillier cryptosystem with threshold decryption \cite{lysyanskaya2001adaptive} within a twin-server architecture to enrich computational operations over ciphertexts. Liu \textit{et al}. \cite{liu2016privacy} proposed a framework named POCF, which enhances operations over ciphertexts and facilitates collaborative decryption by distributing the Paillier cryptosystem private key between two servers, ensuring that neither can independently access the original data. Importantly, this design enables collaborative decryption by the twin-server, allowing the cloud to complete the SOC task and return the result to the client in a single round of communication, while preserving privacy. Likewise, the work \cite{zhao2023soci} implemented a toolkit for SOC on integers, named SOCI, which falls in a twin-server architecture utilizing the threshold Paillier cryptosystem. In particular, the toolkit provides more efficient secure multiplication and secure comparison protocols compared to those proposed by \cite{liu2016privacy}. Furthermore, SOCI\textsuperscript{+}, as proposed by the work \cite{zhao2024soci}, significantly outperforms SOCI \cite{zhao2023soci}, attributed to a more efficient variant of the threshold Paillier cryptosystem and an offline-online mechanism to enhance precomputation capabilities. However, all the aforementioned work fundamentally relies on the non-collusion assumption between the twin-server, which falls short of real-world requirements. Even multi-server applications using a (t, n)-threshold Paillier cryptosystem still require the servers to refrain from collusion.

% GC和SS一起写，更庞大的计算和通讯轮次开销，客户端需要参与的多轮次交互形式
MPC-based solutions, which are more inclined to extend SOC for specific applications, are represented by Yao’s garbled circuits (GC) \cite{yao1986generate} and secret sharing (SS) \cite{shamir1979share}, and enable multiple parties to jointly compute a result without revealing their private inputs. Blanton \textit{et al}. \cite{blanton2012secure_sequence_comparisons} presented a secure sequence comparison scheme (i.e., edit distance) that employs GC in a novel non-black-box manner with two servers, suitable for privacy-preserving DNA alignment and other string matching tasks. A scheme for iris identification is proposed in \cite{blanton2012secure_iris_matching}, which enables encrypted iris code matching on untrusted servers while achieving privacy via predicate encryption in a single-server model or secure MPC with SS in a multi-server model. Additionally, ABY \cite{demmler2015aby} and ABY 2.0 \cite{patra2021aby2} employed SS to implement mixed frameworks for secure two-party computation, integrating arithmetic and boolean. Although these schemes support general computation for arbitrarily complex functions, they demand substantial storage and communication resources and always rely on the non-collusion assumption among parties. To avoid existing limitations, the design of TRUST aims to effectively balance computational efficiency and system generality, achieving an optimal synergy between the two.

\subsection{Hardware-based SOC}
%总起介绍目前TEE的研究侧重于什么
Benefiting from the ability of TEE to prevent malicious software attacks and ensure privacy protection, along with strong guarantees of data confidentiality and integrity, TEE-based secure computations are emerging \cite{li2023survey}, \cite{zheng2021survey}.

% 开始举例，先提方案广阔性，再具体到SOC
Building on the above core capabilities, existing TEEs (e.g., Intel\textsuperscript{\textregistered} Software Guard Extensions (SGX) \cite{costan2016intel}, ARM's TrustZone) underpin advancements of secure computing research and trusted computing applications. Representative examples encompass privacy-preserving outsourced computation \cite{liu2020lightning}, secure credential migration \cite{wang2023ma}, secure root-of-trust design \cite{hoang2022trusted}, secure pipeline management \cite{gao2023enabling}, formal verification framework \cite{sun2020design}, secure electronic healthcare data sharing \cite{li2024trusthealth}, robust federated learning \cite{cao2024srfl}, identity authentication and authorization services \cite{li2024make}, among others.

% 然而这些方案只考虑了TEE在保护数据上的强大能力， 
% 优点：适合特定的高安全性需求场景
% 缺点①：计算资源限制  
% 缺点②：特定功能支持 不支持像通用 CPU 那样的全面计算功能
% 缺点③：运行环境限制 无法运行所有类型的应用程序或复杂的操作系统功能
% 缺点④：扩展性限制（需要依赖外部的计算资源，不能完全依赖于 TEE 内部）

% TEE的多实体应用的例子
Liu \textit{et al}. \cite{liu2022privacy} presented a SOC scheme integrating blockchain, smart contracts, a TEE-equipped cloud service provider, and a trusted authority responsible for managing cryptographic keys as a fully trusted third party. The scheme involves multi-round data transmission among entities, increasing communication overhead, while the deployment of extensive processing tasks on-chain introduces notable computational costs.

% 开销高的一个例子
The previous works \cite{liu2020lightning} combined SS with the threshold Paillier cryptosystem to design a fast processing toolkit on a single cloud server, aiming to resist side-channel attacks against TEE. Yet, this approach relies on multiple trusted processing units, leading to relatively high computation costs and frequent inter-server communication rounds during protocol execution. The work \cite{dai2019sdte} put forward SDTE, a blockchain-based secure data trading system utilizing TEE and AES-256 symmetric encryption. However, it relies on TEE to fully decrypt data using symmetric keys and directly operates over plaintext, neglecting the vulnerability of TEE to side-channel attacks. More than the aforementioned work, many existing TEE-based solutions leverage its strong data protection capabilities, but they do not fully consider the limitations of TEE (e.g., restricted computational resources and scalability constraints), which hinder its applicability in complex, resource-intensive scenarios. TEE is not bullet-proof and has been successfully attacked numerous times in various ways, as shown in \cite{xu2015controlled}, \cite{chen2019sgxpectre}, \cite{liu2015last}. Undoubtedly, treating TEE as a trusted third party to directly perform operations on plaintext or to maintain a master decryption key fails to adequately consider the vulnerability to side-channel attacks, which could lead to privacy leakage and undermine its trustworthiness. 

Inspired by the design proposed in the literatures \cite{liu2020lightning} and \cite{zhao2024soci}, TRUST employs the (2, 2)-threshold Paillier cryptosystem and integrates software and hardware to maximize the security offered by cryptographic primitives and improve the computational efficiency of TEE leveraging CPU resources. TRUST seeks to overcome the real-world challenges in privacy-preserving computation, such as the limited storage resources of TEE and overly strong security assumptions (e.g., non-collusion among multiple servers and the absolute security of TEE), which are misaligned with practical scenarios.

\section{Preliminarys}\label{Preliminarys}
In this section, we introduce the threshold Paillier cryptosystem and TEE, which serve as the foundation for the design of TRUST.

\subsection{FastPaiTD}
FastPaiTD proposed by the work \cite{zhao2024soci}, a $(2, 2)$-threshold Paillier cryptosystem, serves as the preliminarys of our proposed TRUST. FastPaiTD involves 5 probabilistic polynomial time algorithms listed as follows.

1) \kgen\ (key generation): \kgen\ takes a secure parameter $\kappa$ (e.g., $\kappa = 112$) as an input and outputs a public/private pair $(pk, sk)$ and two partially private keys $(sk_1, sk_2)$. Formally, $\kgen(1^\kappa) \rightarrow (pk, sk, sk_1, sk_2)$. Specifically, $pk = (N, h)$, $sk = \alpha$, and $(sk_1, sk_2)$ satisfy $sk_1 + sk_2 = 0 \mod 2\alpha$ and $sk_1 + sk_2 = 1 \mod N$. More details about the above parameters can refer to the work \cite{zhao2024soci}. Note that we set $sk_1$ as a $\sigma$-bit (e.g., $\sigma = 128$) random number and compute $sk_2 = 2\alpha \cdot (2\alpha)^{-1} \mod N - sk_1$. 

2) \enc\ (encryption): \enc\ takes as input a message $m \in \mathbb{Z}_N$ and the public key $pk$, and outputs a ciphertext $\lb m \rb$ as follows
\begin{equation}
    \label{eq:enc}
    \lb m \rb \leftarrow (1 + N)^m \cdot (h^r \md N) ^ N \md N^2,
\end{equation}
where $r$ is a $4\kappa$-bit random number. Formally, $\texttt{Enc}(pk, m) \rightarrow \lb m \rb$.

3) \dec\ (decryption): \dec\ takes as input $\lb m \rb$ adn the private key $sk$ and outputs a message $m$ as follows
\begin{equation}
    m \leftarrow (\frac{\lb m \rb^{2\alpha} \md N^2 - 1}{N} \md N ) \cdot (2\alpha)^{-1} \md N.
\end{equation}
Formally, $\texttt{Dec}(sk, \lb m \rb) \rightarrow m$.

4) \pdec (partial decryption): \pdec takes as input $\lb m \rb$ and a partially private key $sk_i$ ($i \in \{1, 2\}$) and outputs a ciphertext as follows
\begin{equation}
    [m]_i \leftarrow \lb m \rb^{sk_i} \mod N^2.
\end{equation}
Formally, $\mathsf{PDec}(sk_i, \lb m \rb) \rightarrow [m]_i$ for $i \in \{1, 2\}$.

5) \tdec (threshold decryption): \tdec takes as input $[m]_1$ and $[m]_2$ and outputs a message $m$ as follows
\begin{equation}
    m \leftarrow \frac{([m]_1 \cdot [m]_2 \md N^2) - 1}{N} \md N.
\end{equation}
Formally, $\mathsf{TDec}([m]_1, [m]_2) \rightarrow m$.

Note that FastPaiTD supports additive homomorphism and scalar-multiplication homomorphism. Specifically, 
\begin{itemize}
    \item Additive homomorphism: Given two ciphertexts $\lb m_1 \rb$ and $\lb m_2 \rb$, $\dec(sk, \lb m_1 + m_2 \mod N\rb = \dec(sk, \lb m_1 \rb \cdot \lb m_2 \rb \mod N^2)$ holds.
    \item Scalar-multiplication homomorphism: Given a ciphertext $\lb m \rb$ and a constant $c \in \mathbb{Z}_N$, $\dec(sk, \lb c \cdot m \mod N \rb \mod N^2) = \dec(sk, \lb m \rb^c \mod N^2)$ holds. Particularly, when $c = -1$, $\dec(sk, \lb m \rb^c \mod N^2) = -m$ holds.
\end{itemize}

\subsection{Trusted Execution Environment (TEE)}
TEE performs confidential computing by providing an isolated secure enclave within the processor, specifically designed to securely execute user-defined sensitive code in an untrusted host environment. More importantly, TEE provides robust protection for data and computation against any potentially malicious entities residing in the system (including the kernel, hypervisor, etc.) \cite{li2023survey}. Notably, TEE-based secure computation has attracted significant interest from researchers in both academia and industry over the past two decades (e.g., \cite{lie2000architectural}, \cite{champagne2010scalable}, \cite{maas2013phantom}, \cite{costan2016sanctum}), thanks to its several essential features.

\begin{itemize}
\item Attestation \cite{anati2013innovative}. 
At its core, attestation establishes trust between an enclave and another enclave or a remote machine through a secure communication channel. It includes two mechanisms: local attestation, enabling mutual verification between enclaves on the same platform, and remote attestation, which allows clients to verify the integrity and configuration of their program in a remote server.

\item Isolated Execution \cite{mckeen2013innovative}.
Applications leveraging TEE are divided into trusted and untrusted regions, with code and data in the trusted region executing within TEE, offering both confidentiality and integrity guarantees, while those in the untrusted region operate in REE and remain accessible to the operating system. User-defined programs are loaded into enclaves for isolated execution, where sensitive data is securely maintained in Enclave Page Caches (EPCs) within Processor Reserved Memory (PRM), which is pre-configured in Dynamic Random Access Memory (DRAM) during the bootstrapping phase. The code loaded inside an enclave can access both PRM and non-PRM memory, whereas external programs are restricted from accessing PRM. Programmer-defined entry points facilitate secure and adaptable communication between TEE and REE, enabling controlled interaction across these regions.

\item Sealing and Unsealing \cite{anati2013innovative}.
Sealing and unsealing is a mechanism that allows a TEE to securely store and retrieve a secret in non-volatile external storage. On one hand, sealing uses a CPU-derived key, enabling TEE to bind the secret to a specific enclave identity, then encrypt and store it externally. On the other hand, unsealing enables TEE to retrieve and decrypt the secret from storage.
\end{itemize}

While TEE safeguards processed data by encrypting both incoming and outgoing data flows, research has shown that it remains vulnerable to various side-channel attacks, including those targeting page tables \cite{xu2015controlled}, \cite{wang2017leaky}, DRAM \cite{chen2019sgxpectre}, \cite{lipp2020meltdown}, \cite{van2018foreshadow}, and cache \cite{liu2015last}, \cite{gotzfried2017cache}, \cite{moghimi2017cachezoom}.

\begin{figure}[t]
    \centering
    \includegraphics[width=0.98\linewidth]{ 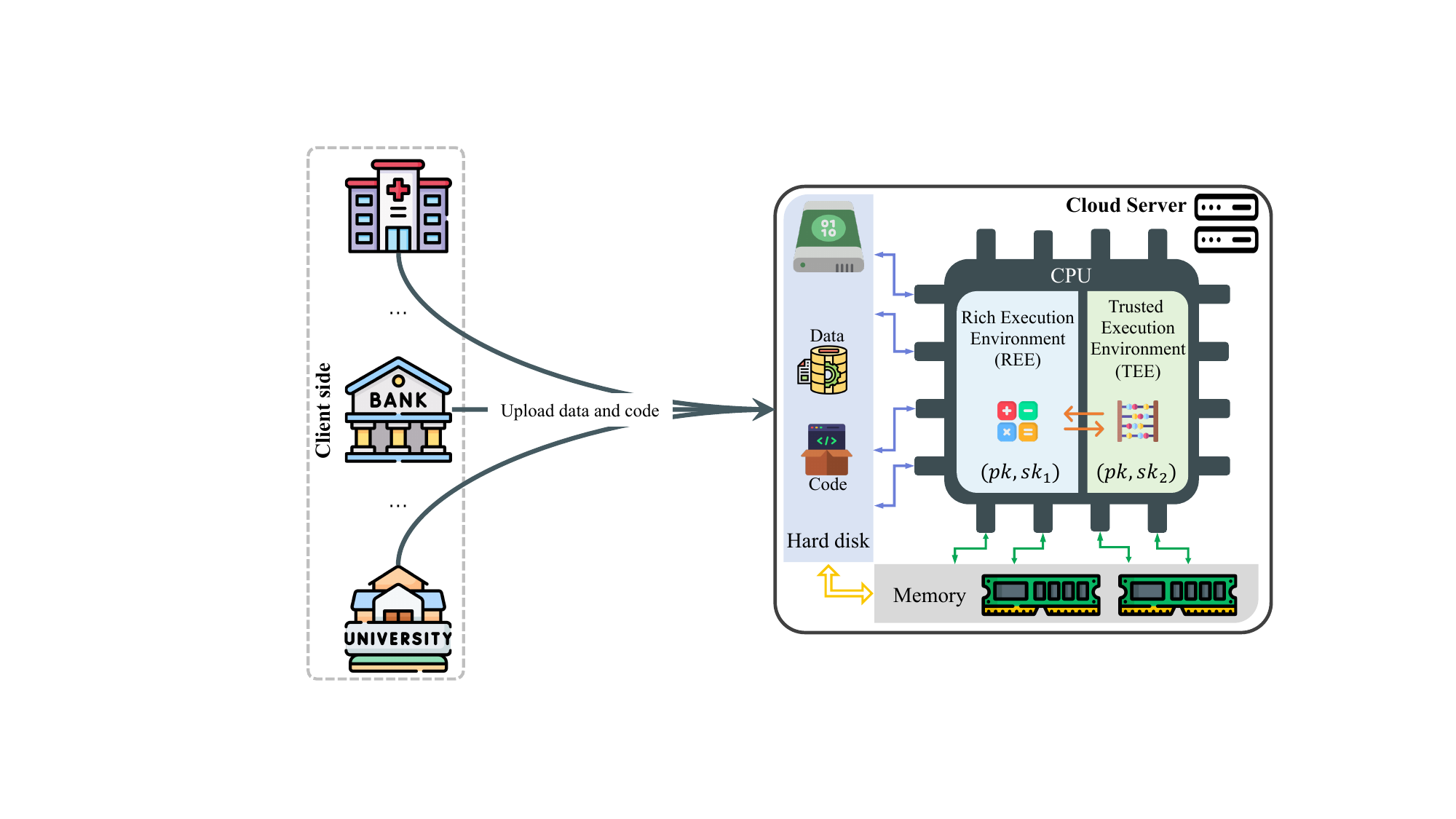}
    \caption{System model.}
    \label{fig:sysmod}
\end{figure}

\section{System Model and Threat Model}\label{System Model and Threat Model}
The system model formulates entities in a system and their abilities. As decipted in Fig. \ref{fig:sysmod}, our system model contains two entities, i.e., a client and a cloud server, where the cloud server is equipped a TEE chip. 

\begin{itemize}
    \item \textbf{Client.} The client is a user who has the requirement of SOC over ciphertexts. The user may be a hospital, a bank, or an university and so on. To this end, the user initializes the $(2, 2)$-threshold Paillier cryptosystem and generates the public key $pk$ and two partially private keys $sk_1$ and $sk_2$. Additionally, the user assigns $(pk, sk_1)$ to REE of the cloud server, and transmits $(pk, sk_2)$ to the TEE of the cloud server via a secure channel. The user stores ciphertexts and code on the hard disk of the cloud server.
    \item \textbf{Cloud server.} The cloud server is required to equip a TEE on its CPU, and provides SOC services for the user through its REE and TEE. Specifically, REE initiates a calculation and obtain the calculation result from TEE through cooperating with TEE.
\end{itemize}

The threat model assumes the attack capability of entities in the system model. In TRUST, we suppose the client is trusted. The REE and TEE of the cloud server are semi-honest, which means either REE or TEE follows protocols but tries to obtain the client's data and calculation results. Particularly, any adversary is allowed to corrupt REE or TEE, but cannot corrupt REE and TEE simultaneously.

\section{TRUST Design}\label{TRUST Design}
In this section, we firstly formalize the SOC of TRUST. After that, a suite of SOC protocols integrated into TRUST is introduced.

\subsection{Problem Formulation}
Suppose encrypted outsourced data $\lb m_1 \rb, \lb m_2 \rb, \cdots, \lb m_n \rb$ require to perform SOC denoted by $\mathcal{F}$, where $n$ is the number of operands and $n \geq 1$. Our proposed TRUST assumes $f_r$ and $f_t$ denote the computation over REE and TEE, respectively. Particularly, TRUST formalizes the SOC $\mathcal{F}(\lb m_1 \rb, \cdots, \lb m_n \rb)$ as follows.
\begin{definition}
Suppose a SOC be denoted by $\mathcal{F}(\lb m_1 \rb, \cdots, \lb m_n \rb)$, where $n$ is the size of outsourced data, TRUST formulates it as a two-party computation between REE and TEE denoted by
\begin{equation}
    \label{trust_soc}
    \mathcal{F}(\lb m_1 \rb, \cdots, \lb m_n \rb) \!\!\!\iff\!\!\! f_t(key_t, f_r(key_r, \lb m_1 \rb, \cdots, \lb m_n \rb)).
\end{equation}
$key_r$ and $key_t$ indicate the key of REE and TEE, respectively. Note that $f_t$ can be omitted when $f_r$ enables computations required by $\mathcal{F}$.
\end{definition}
From Eq. (\ref{trust_soc}), we see that only REE accesses outsourced data $\lb m_1 \rb, \cdots, \lb m_n \rb$ directly, while TEE takes the output of REE as an input. According to the system model of TRUST, $key_r$ and $key_t$ denote $(pk, sk_1)$ and $(pk, sk_2)$, respectively, thus, the output of $f_r$ is encrypted data. Note that when $n = 1$, $\mathcal{F}(\lb m_1 \rb)$ means a unary operation (e.g., absolute value, factorial). $\mathcal{F}(\lb m_1 \rb, \lb m_2 \rb)$ ($n = 2$) indicates a binary operation (e.g., addition, multiplication, comparison), while $\mathcal{F}(\lb m_1 \rb, \lb m_2 \rb, \lb m_3 \rb)$ ($n = 3$) denotes a ternary operation (e.g., ternary conditional operator), and so on.

Fig. \ref{fig:workflow} shows a typical workflow of TRUST. As depicted in Fig. \ref{fig:workflow}, TRUST reconstructs the computation $\mathcal{F}(\lb m_1 \rb, \cdots, \lb m_n \rb)$ into $f_r(pk, sk_1, \lb m_1 \rb, \cdots, \lb m_n \rb) \rightarrow \lb R_0 \rb$ and $f_t(pk, sk_2, \lb R_0 \rb) \rightarrow \mathcal{F}(\lb m_1 \rb, \cdots, \lb m_n \rb)$ (Case 1), or $f_r(pk, \lb m_1 \rb, \cdots, \lb m_n \rb)$ (Case 2), where $\lb R_0 \rb$ is an intermediate result, and $n$ means the amount of operands.
\begin{figure}[ht]
    \centering
    \includegraphics[width=0.98\linewidth]{ 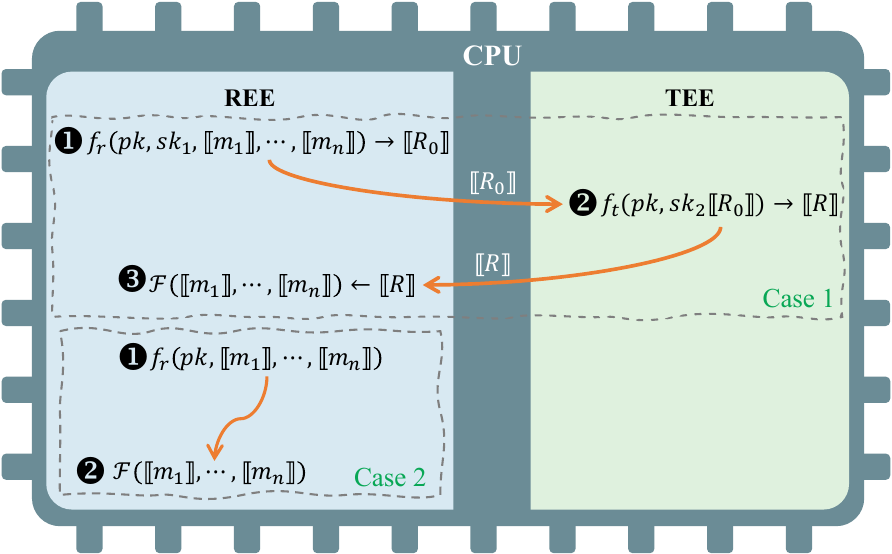}
    \caption{Typical workflow of TRUST.}
    \label{fig:workflow}
\end{figure}

\subsection{Protocols Design}
In this section, we detail operations including multiplication, comparison, equality, absolute value, ternary conditional operator, etc. supported by TRUST. Obviously, TRUST is easy to achieve addition and scalar-multiplication as they are homomorphic features of FastPaiTD. Take addition as an example, $\mathcal{F}(\lb m_1 \rb, \lb m_2 \rb)$ is formulated as $\mathcal{F}(\lb m_1 \rb, \lb m_2 \rb) \rightarrow \lb m_1 + m_2 \rb$. In this case, $f_r$ is formulated as $f_r(\lb m_1 \rb, \lb m_2 \rb) = \lb m_1 \rb \cdot \lb m_2 \rb \mod N^2$, while $f_t$ can be omitted. Following the prior work \cite{zhao2023soci}, we assume $m_1, m_2, \cdots, m_n \in (-2^\ell, 2^\ell)$, where $\ell$ is a constant (e.g., $\ell = 32$).

Algorithm \ref{alg:mul} shows a secure multiplication algorithm following Definition 1. Formally, $\mathcal{F}(\lb m_1 \rb, \cdots, \lb m_n \rb)$ is denoted by $\mathcal{F}_{\mathsf{mul}}(\lb m_1 \rb, \lb m_2 \rb) \to \lb m_1 \cdot m_2 \rb$, where $n = 2$. At Step 1, REE takes as input keys ($pk$ and $sk_1$), $\lb m_1 \rb$, and $\lb m_2 \rb$, and outputs $\lb m_1 + r \rb$, $[m_1 + r]_1$, $\lb m_2 \rb$, and $\lb -r \cdot m_2\rb$, where $r$ is a $\sigma$-bit (e.g., $\sigma = 128$) random number, and $\sample$ means random sampling in this work.

After receiving the output of REE, TEE firstly obtains $m_1 + r$ by calling \pdec and \tdec sequentially and then outputs the final result $\lb R \rb$. According to the homomorphic properties of FastPaiTD, we have
\begin{equation}
    \lb m_2 \rb^{m_1 + r} \cdot \lb -r \cdot m_2 \rb \cdot \enc(pk, 0) \to \lb m_1 \cdot m_2 + r \cdot m_2 - r \cdot m_2 +0 \rb.
\end{equation}
Thus, it is easy to verify $R = m_1 \cdot m_2$. In other words, $\mathcal{F}_{\mathsf{mul}}$ outputs $\lb m_1 \cdot m_2 \rb$ correctly.
\begin{algorithm}[htbp]
	\caption{
        $\mathcal{F}_{\mathsf{mul}}(\lb m_1 \rb, \lb m_2 \rb) \to \lb m_1 \cdot m_2 \rb$}
        \vspace{2pt}
        \label{alg:mul}
	\KwIn{$\lb m_1 \rb$ and $\lb m_2 \rb$.}
        \KwOut{$\lb m_1 \cdot m_2 \rb$.}

        \vspace{3pt}

        \begin{mdframed}[backgroundcolor=reecolor,innerleftmargin=2pt,innerrightmargin=2pt,innerbottommargin=2pt,leftmargin=-8pt,rightmargin=12pt]
            $\triangleright$ Step 1: Computations in REE 
        \begin{mdframed}[backgroundcolor=whitecolor,innertopmargin=4pt,innerbottommargin=4pt,leftmargin=0pt,rightmargin=1pt,innerleftmargin=-4pt]
            \quad $\bullet$ $\lb m_1 + r \rb \gets \lb m_1 \rb \cdot \enc(pk, r)$, where $r \sample \bin^{\sigma}$;
            
            \quad $\bullet$ $\lb -r \cdot m_2 \rb \gets \lb m_2 \rb^{-r}$;
            
            \quad $\bullet$ $\mathsf{PDec}(sk_1, \lb m_1 + r \rb) \to [m_1 + r]_1$;
        \end{mdframed}
        \end{mdframed}
        
        \vspace{-8pt}
        
        \begin{mdframed}[backgroundcolor=whitecolor,innerleftmargin=4pt,innertopmargin=0pt,innerrightmargin=2pt,innerbottommargin=-2pt,leftmargin=-8pt,rightmargin=12pt]
            REE \quad {$\sendmessageright*{\langle \lb m_1 + r \rb, [m_1 + r]_1, \lb m_2 \rb, \lb -r \cdot m_2\rb \rangle}$} TEE
        \end{mdframed}
        
        \vspace{-10pt}
        
        \begin{mdframed}[backgroundcolor=teecolor,innerleftmargin=2pt,innerrightmargin=2pt,innerbottommargin=2pt,leftmargin=-8pt,rightmargin=12pt]
             $\triangleright$ Step 2: Computations in TEE
        \begin{mdframed}[backgroundcolor=whitecolor,innertopmargin=4pt,innerbottommargin=4pt,leftmargin=0pt,rightmargin=1pt,innerleftmargin=-4pt]
            \quad $\bullet$ $\mathsf{PDec}(sk_2, \lb m_1 + r \rb) \to [m_1 + r]_2$;

            \quad $\bullet$ $\mathsf{TDec}([m_1 + r]_1, [m_1 + r]_2) \to m_1 + r$;

            \quad $\bullet$ $\lb R \rb \gets \lb m_2 \rb^{m_1 + r} \cdot \lb -r \cdot m_2 \rb \cdot \enc(pk, 0)$;
        \end{mdframed}
        \end{mdframed}

        \vspace{-8pt}

        \begin{mdframed}[backgroundcolor=whitecolor,innerleftmargin=4pt,innertopmargin=0pt,innerrightmargin=2pt,innerbottommargin=-2pt,leftmargin=-8pt,rightmargin=12pt]
            REE \qquad $\sendmessageleft*{\lb R \rb}$ TEE
        \end{mdframed}

        \vspace{-10pt}
        
        \begin{mdframed}[backgroundcolor=reecolor,innerleftmargin=2pt,innerrightmargin=2pt,innerbottommargin=2pt,leftmargin=-8pt,rightmargin=12pt]
            $\triangleright$ Step 3: Computations in REE 
        \begin{mdframed}[backgroundcolor=whitecolor,innertopmargin=4pt,innerbottommargin=4pt,leftmargin=0pt,rightmargin=1pt,innerleftmargin=-4pt]
            \quad $\bullet$ $\mathcal{F}_{\mathsf{mul}} \gets \lb R \rb$;
        \end{mdframed}
        \end{mdframed}
        \vspace{-5pt}
\end{algorithm}

Algorithm \ref{alg:cmp} depicts a secure comparison algorithm following Definition 1. Formally, $\mathcal{F}(\lb m_1 \rb, \cdots, \lb m_n \rb)$ is denoted by $\mathcal{F}_{\mathsf{cmp}}(\lb m_1 \rb, \lb m_2 \rb) \to \lb \mu \rb$, where $\mu = 0$ when $m_1 \geq m_2$, otherwise ($m_1 < m_2$), $\mu = 1$. At Step 1, REE takes as input keys ($pk$ and $sk_1$), $\lb m_1 \rb$, and $\lb m_2 \rb$, and outputs $\lb d \rb$, $[d]_1$, and $\lb \pi \rb$. Specifically, REE firstly generates $r_1$ and $r_2$ ($r_1 + r_2 > \frac{N}{2}$) by random sampling. After that, REE randomly samples $\pi$ from $\bin$ and computes $\lb d \rb$. According to the homomorphic properties of FastPaiTD, we have 
\begin{equation}
    \label{eq:dv}
    d = \begin{cases}
	     r_1 \cdot (m_1 - m_2 + 1) + r_2, &\pi = 0; \\
	     r_1 \cdot (m_2 - m_1) + r_2. &\pi = 1;	\end{cases}
\end{equation}

After receiving the output of REE, TEE firstly obtains $d$ by calling \pdec and \tdec sequentially. Then, TEE computes and returns $\lb R \rb$ to REE.
It is easy to verify
\begin{equation}
    \label{eq:cmpr}
    R = \mu_0 + (1 - 2 \mu_0) \cdot \pi.
\end{equation}
If $\mu_0 = 0$ (i.e., $d > \frac{N}{2}$), we have $R = \pi$, whereas $\mu_0 = 1$ (i.e., $d < \frac{N}{2}$), we have $R = 1 - \pi$. As shown in Eq. (\ref{eq:cmpr}), the value of $R$ depends on the ones $\mu_0$ and $\pi$, while the one $\mu_0$ depends on that $d$.

\textit{Case 1} ($\pi = 0$): If $r_1 \cdot (m_1 - m_2 + 1) + r_2 > \frac{N}{2}$ (i.e., $d > \frac{N}{2}$), we have $(m_1 - m_2 + 1) \geq 1$ because $r_1 + r_2 > \frac{N}{2}$ and $r_2 < \frac{N}{2}$. In this case, $m_1 \geq m_2$, while $R = \mu_0 = 0$. 
If $r_1 \cdot (m_1 - m_2 + 1) + r_2 < \frac{N}{2}$ (i.e., $d < \frac{N}{2}$), we have $(m_1 - m_2 + 1) \leq 0$ because $r_1 + r_2 > \frac{N}{2}$ and $r_2 < \frac{N}{2}$. In this case, $m_1 < m_2$, while $R = \mu_0 = 1$.
In other words, when $\pi = 0$, if $m_1 \geq m_2$, $\mathcal{F}_{\mathsf{cmp}}$ outputs $\mu = 0$, otherwise ($m_1 < m_2$), outputs $\mu = 1$.

\textit{Case 2} ($\pi = 1$): If $r_1 \cdot (m_2 - m_1) + r_2 > \frac{N}{2}$ (i.e., $d > \frac{N}{2}$), we have $m_2 - m_1 \geq 1$ because $r_1 + r_2 > \frac{N}{2}$ and $r_2 < \frac{N}{2}$. In this case, $m_1 < m_2$, while $R = 1 - \mu_0 = 1$. 
If $r_1 \cdot (m_2 - m_2) + r_2 < \frac{N}{2}$ (i.e., $d < \frac{N}{2}$), we have $m_2 - m_1 \leq 0$ because $r_1 + r_2 > \frac{N}{2}$ and $r_2 < \frac{N}{2}$. In this case, $m_1 \geq m_2$, while $R = 1 - \mu_0 = 0$.
In other words, when $\pi = 1$, if $m_1 \geq m_2$, $\mathcal{F}_{\mathsf{cmp}}$ outputs $\mu = 0$, otherwise ($m_1 < m_2$), outputs $\mu = 1$.

Taken together, either $\pi = 0$ or $\pi = 1$, $\mathcal{F}_{\mathsf{cmp}}$ always computes $\mu$ correctly.
\begin{algorithm}[htbp]
	\caption{
        $\mathcal{F}_{\mathsf{cmp}}(\lb m_1 \rb, \lb m_2 \rb) \to \lb \mu \rb$}
        \vspace{2pt}
        \label{alg:cmp}
	\KwIn{$\lb m_1 \rb$ and $\lb m_2 \rb$.}
        \KwOut{$\lb \mu \rb$, if $m_1 \geq m_2$, $\mu = 0$, otherwise, $\mu = 1$.}

        \vspace{3pt}

        \begin{mdframed}[backgroundcolor=reecolor,innerleftmargin=2pt,innerrightmargin=2pt,innerbottommargin=2pt,leftmargin=-8pt,rightmargin=12pt]
            $\triangleright$ Step 1: Computations in REE 
        \begin{mdframed}[backgroundcolor=whitecolor,innertopmargin=4pt,innerbottommargin=4pt,leftmargin=0pt,rightmargin=1pt,innerleftmargin=-4pt]
            \quad $\bullet$ $r_1 \sample \bin^{\sigma}$ and $r_2 \sample (\frac{N}{2} - r_1, \frac{N}{2})$;
            
            \quad $\bullet$ $\pi \sample \bin$, \hangindent = 18pt 
            \begin{equation}
                \nonumber
	        \lb d \rb \gets \begin{cases}
	            \lb m_1 \rb^{r_1} \cdot \lb m_2 \rb^{-r_1} \cdot \enc(pk, r_1 + r_2), &\pi = 0; \\
	            \lb m_1 \rb^{-r_1} \cdot \lb m_2 \rb^{r_1} \cdot \enc(pk, r_2), &\pi = 1;	
	        \end{cases}
            \end{equation}
            
            \quad $\bullet$ $\mathsf{PDec}(sk_1, \lb d \rb) \to [d]_1$ and $\enc(pk, \pi) \to \lb \pi \rb$;
        \end{mdframed}
        \end{mdframed}
        
        \vspace{-8pt}
        
        \begin{mdframed}[backgroundcolor=whitecolor,innerleftmargin=4pt,innertopmargin=0pt,innerrightmargin=2pt,innerbottommargin=-2pt,leftmargin=-8pt,rightmargin=12pt]
            REE \quad {$\sendmessageright*{\langle \lb d \rb, [d]_1, \lb \pi \rb \rangle}$} TEE
        \end{mdframed}
        
        \vspace{-10pt}
        
        \begin{mdframed}[backgroundcolor=teecolor,innerleftmargin=2pt,innerrightmargin=2pt,innerbottommargin=2pt,leftmargin=-8pt,rightmargin=12pt]
             $\triangleright$ Step 2: Computations in TEE
        \begin{mdframed}[backgroundcolor=whitecolor,innertopmargin=4pt,innerbottommargin=4pt,leftmargin=0pt,rightmargin=1pt,innerleftmargin=-4pt]
            \quad $\bullet$ $\mathsf{PDec}(sk_2, \lb d \rb) \to [d]_2$ and $\mathsf{TDec}([d]_1, [d]_2) \to d$;

            \hangindent = -36pt
            \vspace{-12pt}
            \begin{flalign*}
	        \bullet \begin{cases}
	            \mu_0 = 0 \mathrm{\ and\ } \lb \mu_0 \rb \gets \enc(pk, 0), &d > \frac{N}{2}; \\
	            \mu_0 = 1 \mathrm{\ and\ } \lb \mu_0 \rb \gets \enc(pk, 1), &d < \frac{N}{2};	
	        \end{cases}
	        \end{flalign*}

            \quad $\bullet$ $\lb R \rb \gets \lb \mu_0 \rb \cdot \lb \pi \rb^{1-2\mu_0}$;
        \end{mdframed}
        \end{mdframed}

        \vspace{-8pt}

        \begin{mdframed}[backgroundcolor=whitecolor,innerleftmargin=4pt,innertopmargin=0pt,innerrightmargin=2pt,innerbottommargin=-2pt,leftmargin=-8pt,rightmargin=12pt]
            REE \qquad $\sendmessageleft*{\lb R \rb}$ TEE
        \end{mdframed}

        \vspace{-10pt}
        
        \begin{mdframed}[backgroundcolor=reecolor,innerleftmargin=2pt,innerrightmargin=2pt,innerbottommargin=2pt,leftmargin=-8pt,rightmargin=12pt]
            $\triangleright$ Step 3: Computations in REE 
        \begin{mdframed}[backgroundcolor=whitecolor,innertopmargin=4pt,innerbottommargin=4pt,leftmargin=0pt,rightmargin=1pt,innerleftmargin=-4pt]
            \quad $\bullet$ $\mathcal{F}_{\mathsf{cmp}} \gets \lb R \rb$;
        \end{mdframed}
        \end{mdframed}
        \vspace{-5pt}
\end{algorithm}

Algorithms \ref{alg:mul} and \ref{alg:cmp} denote a linear secure operation and a non-linear secure operation, respectively. In addition to the non-linear comparison operation, a equality operation is also common. Algorithm \ref{alg:eql} lists the detail of a secure equality algorithm. Formally, $\mathcal{F}(\lb m_1 \rb, \cdots, \lb m_n \rb)$ is denoted by $\mathcal{F}_{\mathsf{eql}}(\lb m_1 \rb, \lb m_2 \rb) \to \lb \mu \rb$, where $\mu = 0$ if $m_1 = m_2$ holds, otherwise ($m_1 \neq m_2$), $\mu = 1$. $\mathcal{F}_{\mathsf{eql}}$ is concreted to two critical steps. At Step 1, REE computes $\lb d_1 \rb$ and $\lb d_2 \rb$ 
in the same way as $\lb d \rb$ in Algorithm \ref{alg:cmp}, and outputs $\lb d_1 \rb$, $[d_1]_1$, $\lb \pi_1 \rb$, $\lb d_2 \rb$, $[d_2]_1$, and $\lb \pi_2 \rb$. According to the homomorphic properties of FastPaiTD, we have 
\begin{equation}
    \label{eq:d1v}
    d_1 = \begin{cases}
	     r_1 \cdot (m_1 - m_2 + 1) + r_2, &\pi = 0; \\
	     r_1 \cdot (m_2 - m_1) + r_2, &\pi = 1;	\end{cases}
\end{equation}
\begin{equation}
    \label{eq:d2v}
    d_2 = \begin{cases}
	     r_1' \cdot (m_2 - m_1 + 1) + r_2', &\pi = 0; \\
	     r_1' \cdot (m_1 - m_2) + r_2'. &\pi = 1;	\end{cases}
\end{equation}

At Step 2, TEE obtains $d_1$ and $d_2$ by calling \pdec and $\mathsf{TDec}$. Essentially, $d_1$ and $d_2$ implies the magnitude relationship between $m_1$ and $m_2$, and the one between $m_2$ and $m_1$, respectively. Mathematically, $m_1 = m_2$ is equivalent to $m_1 \geq m_2 \And m_2 \geq m_1$. Inspired by this idea, at Step 2, TEE computes $\lb R \rb$, where 
\begin{equation}
    \label{eq:eqlr}
    R = \mu_0 + (1 - 2\mu_0) \cdot \pi_1 + \mu_0' + (1 - 2\mu_0') \cdot \pi_2.
\end{equation}

According to Algorithm \ref{alg:cmp}, if $m_1 \geq m_2$, $\mu_0 + (1 - 2\mu_0) \cdot \pi_1 = 0$. Additionally, if $m_2 \geq m_1$, $\mu_0' + (1 - 2\mu_0') \cdot \pi_2 = 0$. And since whether it is $\mu_0 + (1 - 2\mu_0) \cdot \pi_1$ or $\mu_0' + (1 - 2\mu_0') \cdot \pi_2$, it can only result in $0$ and $1$. Furthermore, $\mu_0 + (1 - 2\mu_0) \cdot \pi_1$ and $\mu_0' + (1 - 2\mu_0') \cdot \pi_2$ cannot both are equal to $1$ as $m_1 < m_2 \And m_2 < m_1$ cannot be true at the same time. Therefore, $R = 0$ if and only if $m_1 \geq m_2 \And m_2 \geq m_1$ (i.e., $m_1 = m_2$). Also, $R = 1$ if and only if $m_1 > m_2 \And m_2 < m_1$ or $m_1 < m_2 \And m_2 > m_1$. In other words, if $m_1 = m_2$, $\mathcal{F}_{\mathsf{eql}}$ outputs $\lb 0 \rb$, otherwise $m_1 \neq m_2$ (i.e., $m_1 > m_2$ or $m_2 > m_1$), it outputs $\lb 1 \rb$. Thus, we say that $\mathcal{F}_{\mathsf{eql}}$ outputs $\lb \mu \rb$ correctly.
\begin{algorithm}[htbp]
	\caption{
        $\mathcal{F}_{\mathsf{eql}}(\lb m_1 \rb, \lb m_2 \rb) \to \lb \mu \rb$}
        \vspace{2pt}
        \label{alg:eql}
	\KwIn{$\lb m_1 \rb$ and $\lb m_2 \rb$.}
        \KwOut{$\lb \mu \rb$, if $m_1 = m_2$, $\mu = 0$, otherwise, $\mu = 1$.}

        \vspace{3pt}

        \begin{mdframed}[backgroundcolor=reecolor,innerleftmargin=2pt,innerrightmargin=2pt,innerbottommargin=2pt,leftmargin=-8pt,rightmargin=12pt]
            $\triangleright$ Step 1: Computations in REE 
        \begin{mdframed}[backgroundcolor=whitecolor,innertopmargin=4pt,innerbottommargin=4pt,leftmargin=0pt,rightmargin=1pt,innerleftmargin=-4pt]
            \quad $\bullet$ $r_1 \sample \bin^{\sigma}, r_2 \sample (\frac{N}{2} - r_1, \frac{N}{2})$ and\\\hangindent = 18pt
            $r_1' \sample \bin^{\sigma}, r_2' \sample (\frac{N}{2} - r_1', \frac{N}{2})$;
            
            \quad $\bullet$ $\pi_1 \sample \bin$ and $\pi_2 \sample \bin$, \hangindent = 18pt 
            \begin{equation}
                \nonumber
	        \lb d_1 \rb \!\gets\! \begin{cases}
	            \lb m_1 \rb^{r_1} \cdot \lb m_2 \rb^{-r_1} \!\cdot\! \enc(pk, r_1 + r_2), \!\!\!&\pi_1 = 0; \\
	            \lb m_1 \rb^{-r_1} \cdot \lb m_2 \rb^{r_1} \!\cdot\! \enc(pk, r_2), \!\!\!&\pi_1 = 1;	
	        \end{cases}
            \end{equation}
            \hangindent = 18pt 
            \begin{equation}
                \nonumber
	        \lb d_2 \rb \!\gets\! \begin{cases}
	            \lb m_2 \rb^{r_1'} \cdot \lb m_1 \rb^{-r_1'} \!\cdot\! \enc(pk, r_1' + r_2'), \!\!\!&\pi_2 = 0; \\
	            \lb m_2 \rb^{-r_1'} \cdot \lb m_1 \rb^{r_1'} \!\cdot\! \enc(pk, r_2'), \!\!\!&\pi_2 = 1;	
	        \end{cases}
            \end{equation}
            
            \quad $\bullet$ $\mathsf{PDec}(sk_1, \lb d_1 \rb) \to [d_1]_1, \enc(pk, \pi_1) \to \lb \pi_1 \rb$ and \\ \hangindent = 18pt
            $\mathsf{PDec}(sk_1, \lb d_2 \rb) \to [d_2]_1, \enc(pk, \pi_2) \to \lb \pi_2 \rb$;
        \end{mdframed}
        \end{mdframed}
        
        \vspace{-8pt}
        
        \begin{mdframed}[backgroundcolor=whitecolor,innerleftmargin=4pt,innertopmargin=0pt,innerrightmargin=2pt,innerbottommargin=-2pt,leftmargin=-8pt,rightmargin=12pt]
            REE \quad {$\sendmessageright*{\langle \lb d_1 \rb, [d_1]_1, \lb \pi_1 \rb, \lb d_2 \rb, [d_2]_1, \lb \pi_2 \rb \rangle}$} TEE
        \end{mdframed}
        
        \vspace{-10pt}
        
        \begin{mdframed}[backgroundcolor=teecolor,innerleftmargin=2pt,innerrightmargin=2pt,innerbottommargin=2pt,leftmargin=-8pt,rightmargin=12pt]
             $\triangleright$ Step 2: Computations in TEE
        \begin{mdframed}[backgroundcolor=whitecolor,innertopmargin=4pt,innerbottommargin=4pt,leftmargin=0pt,rightmargin=1pt,innerleftmargin=-4pt]
            \quad $\bullet$ $\mathsf{PDec}(sk_2, \lb d_1 \rb) \to [d_1]_2, \mathsf{TDec}([d_1]_1, [d_1]_2) \to d_1$ and \\\hangindent=18pt
            $\mathsf{PDec}(sk_2, \lb d_2 \rb) \to [d_2]_2, \mathsf{TDec}([d_2]_1, [d_2]_2) \to d_2$;

            \hangindent = -36pt
            \vspace{-12pt}
            \begin{flalign*}
	        \bullet \begin{cases}
	            \mu_0 = 0 \mathrm{\ and\ } \lb \mu_0 \rb \gets \enc(pk, 0), &d_1 > \frac{N}{2}; \\
	            \mu_0 = 1 \mathrm{\ and\ } \lb \mu_0 \rb \gets \enc(pk, 1), &d_1 < \frac{N}{2};\\
	            \mu_0' = 0 \mathrm{\ and\ } \lb \mu_0' \rb \gets \enc(pk, 0), &d_2 > \frac{N}{2}; \\
	            \mu_0' = 1 \mathrm{\ and\ } \lb \mu_0' \rb \gets \enc(pk, 1), &d_2 < \frac{N}{2};	
	        \end{cases}
	        \end{flalign*}

            \quad $\bullet$ $\lb R \rb \gets \lb \mu_0 \rb \cdot \lb \pi_1 \rb^{1-2\mu_0} \cdot \lb \mu_0' \rb \cdot \lb \pi_2 \rb^{1-2\mu_0'}$;
        \end{mdframed}
        \end{mdframed}

        \vspace{-8pt}

        \begin{mdframed}[backgroundcolor=whitecolor,innerleftmargin=4pt,innertopmargin=0pt,innerrightmargin=2pt,innerbottommargin=-2pt,leftmargin=-8pt,rightmargin=12pt]
            REE \qquad $\sendmessageleft*{\lb R \rb}$ TEE
        \end{mdframed}

        \vspace{-10pt}
        
        \begin{mdframed}[backgroundcolor=reecolor,innerleftmargin=2pt,innerrightmargin=2pt,innerbottommargin=2pt,leftmargin=-8pt,rightmargin=12pt]
            $\triangleright$ Step 3: Computations in REE 
        \begin{mdframed}[backgroundcolor=whitecolor,innertopmargin=4pt,innerbottommargin=4pt,leftmargin=0pt,rightmargin=1pt,innerleftmargin=-4pt]
            \quad $\bullet$ $\mathcal{F}_{\mathsf{eql}} \gets \lb R \rb$;
        \end{mdframed}
        \end{mdframed}
        \vspace{-5pt}
\end{algorithm}

Now, we discuss one unary operation (i.e., absolute value). If $m \geq 0$, $|m| = m$, otherwise ($m < 0$), $|m| = -m$. Thus, $|m|$ can be denoted by $|m| = (1 - 2\mu) \cdot m$, where $\mu = 0$ when $m \geq 0$, otherwise ($m < 0$), $\mu = 1$.

Inspired by the above observation, Algorithm \ref{alg:abs} depicts a secure absolute value algorithm. Formally, $\mathcal{F}(\lb m_1 \rb)$ is denoted by $\mathcal{F}_{\mathsf{abs}}(\lb m_1 \rb) \to \lb |m| \rb$, where $n$ is set as $n = 1$. Specifically, at Step 1, REE takes as input keys and $\lb m_1 \rb$, and outputs four ciphertexts including $\lb d \rb$, $[d]_1$, $\lb m_1 \rb$, and $\lb \pi \cdot m_1 \rb$, where $\lb \pi \cdot m_1 \rb \gets \lb m_1 \rb^\pi \cdot \enc(pk, 0)$. According to the homomorphic properties of FastPaiTD, we see
\begin{equation}
    \label{eq:absdv}
    d = \begin{cases}
	     r_1 \cdot (m_1 + 1) + r_2, &\pi = 0; \\
	     r_1 \cdot (- m_1) + r_2. &\pi = 1;	\end{cases}
\end{equation}

At Step 2, REE obtains $d$ by calling \pdec and $\mathsf{TDec}$. From Eq.(\ref{eq:absdv}), we see that $d$ essentially implies the magnitude relationship between $m_1$ and $0$. After that, TEE computes and returns $\lb R \rb$ to REE. It is easy to verify
\begin{equation}
    \label{eq:eqlr}
    R = (1 - 2\mu_0) \cdot m_1 + (4\mu_0 - 2) \cdot \pi \cdot m_1,
\end{equation}
where $\mu_0 = 0$ when $d > \frac{N}{2}$, otherwise, $\mu_0 = 1$.

As $|m_1| = (1 - 2\mu) \cdot m_1$, where $\mu = 0$ when $m_1 \geq 0$, otherwise, $\mu = 1$, according to Algorithm \ref{alg:cmp}, we see $\mu = \mu_0 + (1 - 2\mu_0) \cdot \pi$. Then, we have
\begin{align}
    |m_1| &= [1 - 2 \cdot (\mu_0 + (1 - 2\mu_0) \cdot \pi)] \cdot m_1 \nonumber\\
    &= (1 - 2\mu_0) \cdot m_1 + (4\mu_0 - 2) \cdot \pi \cdot m_1 \nonumber\\
    &= R.
\end{align}
Thus, we can say that $\mathcal{F}_{\mathsf{abs}}$ oputputs $\lb |m_1| \rb$ correctly.
\begin{algorithm}[htbp]
	\caption{
        $\mathcal{F}_{\mathsf{abs}}(\lb m_1 \rb) \to \lb |m_1| \rb$}
        \vspace{2pt}
        \label{alg:abs}
	\KwIn{$\lb m_1 \rb$.}
        \KwOut{$\lb |m_1| \rb$.}

        \vspace{3pt}

        \begin{mdframed}[backgroundcolor=reecolor,innerleftmargin=2pt,innerrightmargin=2pt,innerbottommargin=2pt,leftmargin=-8pt,rightmargin=12pt]
            $\triangleright$ Step 1: Computations in REE 
        \begin{mdframed}[backgroundcolor=whitecolor,innertopmargin=4pt,innerbottommargin=4pt,leftmargin=0pt,rightmargin=1pt,innerleftmargin=-4pt]
            \quad $\bullet$ $r_1 \sample \bin^{\sigma}$ and $r_2 \sample (\frac{N}{2} - r_1, \frac{N}{2})$;
            
            \quad $\bullet$ $\pi \sample \bin$, \hangindent = -25pt 
            \begin{equation}
                \nonumber
	        \lb d \rb \gets \begin{cases}
	            \lb m_1 \rb^{r_1} \cdot \enc(pk, r_1 + r_2), &\pi = 0; \\
	            \lb m_1 \rb^{-r_1} \cdot \enc(pk, r_2), &\pi = 1;	
	        \end{cases}
            \end{equation}
            
            \quad $\bullet$ $\mathsf{PDec}(sk_1, \lb d \rb) \to [d]_1$, $\lb \pi \cdot m_1 \rb \gets \lb m_1 \rb^\pi \cdot \enc(pk, 0)$;
        \end{mdframed}
        \end{mdframed}
        
        \vspace{-8pt}
        
        \begin{mdframed}[backgroundcolor=whitecolor,innerleftmargin=4pt,innertopmargin=0pt,innerrightmargin=2pt,innerbottommargin=-2pt,leftmargin=-8pt,rightmargin=12pt]
            REE \quad {$\sendmessageright*{\langle \lb d \rb, [d]_1, \lb m_1 \rb, \lb \pi \cdot m_1 \rb \rangle}$} TEE
        \end{mdframed}
        
        \vspace{-10pt}
        
        \begin{mdframed}[backgroundcolor=teecolor,innerleftmargin=2pt,innerrightmargin=2pt,innerbottommargin=2pt,leftmargin=-8pt,rightmargin=12pt]
             $\triangleright$ Step 2: Computations in TEE
        \begin{mdframed}[backgroundcolor=whitecolor,innertopmargin=4pt,innerbottommargin=4pt,leftmargin=0pt,rightmargin=1pt,innerleftmargin=-4pt]
            \quad $\bullet$ $\mathsf{PDec}(sk_2, \lb d \rb) \to [d]_2$ and $\mathsf{TDec}([d]_1, [d]_2) \to d$;

            \hangindent = -132pt
            \vspace{-12pt}
            \begin{flalign*}
	        \bullet\ \mu_0 = \begin{cases}
	            0, &d > \frac{N}{2}; \\
	            1, &d < \frac{N}{2};	
	        \end{cases}
	        \end{flalign*}

            \quad $\bullet$ $\lb R \rb \gets \lb m_1 \rb^{1 - 2\mu_0} \cdot \lb \pi \cdot m_1 \rb^{4\mu_0 - 2} \cdot \enc(pk, 0)$;
        \end{mdframed}
        \end{mdframed}

        \vspace{-8pt}

        \begin{mdframed}[backgroundcolor=whitecolor,innerleftmargin=4pt,innertopmargin=0pt,innerrightmargin=2pt,innerbottommargin=-2pt,leftmargin=-8pt,rightmargin=12pt]
            REE \qquad $\sendmessageleft*{\lb R \rb}$ TEE
        \end{mdframed}

        \vspace{-10pt}
        
        \begin{mdframed}[backgroundcolor=reecolor,innerleftmargin=2pt,innerrightmargin=2pt,innerbottommargin=2pt,leftmargin=-8pt,rightmargin=12pt]
            $\triangleright$ Step 3: Computations in REE 
        \begin{mdframed}[backgroundcolor=whitecolor,innertopmargin=4pt,innerbottommargin=4pt,leftmargin=0pt,rightmargin=1pt,innerleftmargin=-4pt]
            \quad $\bullet$ $\mathcal{F}_{\mathsf{abs}} \gets \lb R \rb$;
        \end{mdframed}
        \end{mdframed}
        \vspace{-5pt}
\end{algorithm}

Finally, we describe a secure ternary operation (i.e., ternary conditional operator). Formally, the ternary conditional operation can be formulated as ``\textit{if a then b else c}". Without loss of generality, $\mathcal{F}(\lb m_1 \rb, \lb m_2 \rb, \lb m_3 \rb)$ is denoted by $\mathcal{F}_{\mathsf{trn}}(\lb m_1 \rb, \lb m_2 \rb, \lb m_3 \rb) \to \lb m \rb$, where if $m_1 = 1$, then $m = m_2$, else $m = m_3$. Note that $m = m_2 + \mu \cdot (m_3 - m_2)$ is always true if $\mu = 0$ when $m_1 = 1$, otherwise ($m_1 \neq 1$), $\mu = 1$.

Algorithm \ref{alg:trn} shows the detail of the secure ternary operation. Specifically, at Step 1, REE computes $\lb d_1 \rb$ and $\lb d_2 \rb$ in the same way as those in Algorithm \ref{alg:eql}, and outputs eight ciphertexts, where $\lb \pi_1 \cdot (m_3 - m_2) \rb \gets \lb m_3 - m_2 \rb^{\pi_1} \cdot \enc(pk, 0)$, and $\lb \pi_2 \cdot (m_3 - m_2) \rb \gets \lb m_3 - m_2 \rb^{\pi_2} \cdot \enc(pk, 0)$.

At Step 2, TEE firstly obtains $d_1$ and $d_2$ by calling \pdec and $\mathsf{TDec}$. $d_1$ and $d_2$ essentially implies the magnitude relationship between $m_1$ and $1$ the one between $1$ and $m_1$, respectively. After that, TEE computes $\lb R \rb$. According to Eq. (\ref{eq:eqlr}), it is easy to verify that
\begin{align}
    R = & m_2 \!+\! (\mu_0 + \mu_0') \cdot (m_3 - m_2) \!+\! (1 \!-\! 2\mu_0) \cdot \pi_1 \cdot \cdot (m_3 - m_2)  \nonumber\\
    &+ (1 - 2\mu_0') \cdot \pi_2 (m_3 - m_2) \nonumber\\
    =& [(\mu_0 + \mu_0') \!+\! (1-2\mu_0)\cdot\pi_1 \!+\! (1-2\mu_0')\cdot\pi_2]\cdot(m_3 - m_2)\nonumber\\
    &+ m_2\nonumber\\
    =& m_2 + \mu \cdot (m_3 - m_2),
\end{align}
where $\mu = 0$ if $m_1 = 1$, otherwise, $\mu = 1$. Thus, we say $\mathcal{F}_{\mathsf{trn}}$ outputs $\lb m \rb$ correctly.
\begin{algorithm}[htbp]
	\caption{
        $\mathcal{F}_{\mathsf{trn}}(\lb m_1 \rb, \lb m_2 \rb, \lb m_3 \rb) \to \lb m \rb$}
        \vspace{2pt}
        \label{alg:trn}
	    \KwIn{$\lb m_1 \rb, \lb m_2 \rb$, and $\lb m_3 \rb$.}
        \KwOut{$\lb m \rb$, if $m_1 = 1$, then $m = m_2$, else $m = m_3$.}

        \vspace{3pt}

        \begin{mdframed}[backgroundcolor=reecolor,innerleftmargin=2pt,innerrightmargin=2pt,innerbottommargin=2pt,leftmargin=-8pt,rightmargin=12pt]
            $\triangleright$ Step 1: Computations in REE 
        \begin{mdframed}[backgroundcolor=whitecolor,innertopmargin=4pt,innerbottommargin=4pt,leftmargin=0pt,rightmargin=1pt,innerleftmargin=-4pt]
            \quad $\bullet$ $r_1 \sample \bin^{\sigma}, r_2 \sample (\frac{N}{2} - r_1, \frac{N}{2})$ and\\\hangindent = 18pt
            $r_1' \sample \bin^{\sigma}, r_2' \sample (\frac{N}{2} - r_1', \frac{N}{2})$;
            
            \quad $\bullet$ $\pi_1 \sample \bin$ and $\pi_2 \sample \bin$, \hangindent = -8pt 
            \begin{equation}
                \nonumber
	        \lb d_1 \rb \gets \begin{cases}
	            \lb m_1 \rb^{r_1} \enc(pk, r_2), &\pi_1 = 0; \\
	            \lb m_1 \rb^{-r_1} \cdot \enc(pk, r_1 + r_2), &\pi_1 = 1;	
	        \end{cases}
            \end{equation}
            \hangindent = -8pt 
            \begin{equation}
                \nonumber
	        \lb d_2 \rb \!\gets\! \begin{cases}
	            \lb m_1 \rb^{-r_1'} \cdot \enc(pk, 2r_1' + r_2'), &\pi_2 = 0; \\
	            \lb m_1 \rb^{r_1'} \cdot \enc(pk, r_2' - r_1'), &\pi_2 = 1;	
	        \end{cases}
            \end{equation}
            
            \quad $\bullet$ $\mathsf{PDec}(sk_1, \lb d_1 \rb) \to [d_1]_1$ and $\mathsf{PDec}(sk_1, \lb d_2 \rb) \to [d_2]_1$; \\ \hangindent = 18pt
            
            \quad $\bullet$ $\lb m_3 - m_2 \rb \gets \lb m_3 \rb \cdot \lb m_2 \rb^{-1}$, $\lb \pi_1 \cdot (m_3 - m_2) \rb \gets $\\ \hangindent = 19pt
            $\lb m_3 - m_2 \rb^{\pi_1} \cdot \enc(pk, 0)$, and $\lb \pi_2 \cdot (m_3 - m_2) \rb \gets \lb m_3 - m_2 \rb^{\pi_2} \cdot \enc(pk, 0)$;
        \end{mdframed}
        \end{mdframed}
        
        \vspace{-8pt}
        
        \begin{mdframed}[backgroundcolor=whitecolor,innerleftmargin=4pt,innertopmargin=0pt,innerrightmargin=2pt,innerbottommargin=-2pt,leftmargin=-8pt,rightmargin=12pt]
            REE \ {$\sendmessageright*{\langle \lb m_3 \!-\! m_2 \rb, \lb d_1 \rb, [d_1]_1, \lb \pi_1 \!\cdot\! (m_3 \!-\! m_2) \rb, \\
            \lb m_2 \rb, \lb d_2 \rb, [d_2]_1, \lb \pi_2 \!\cdot\! (m_3 \!-\! m_2) \rb \rangle}$} TEE
        \end{mdframed}
        
        \vspace{-10pt}
        
        \begin{mdframed}[backgroundcolor=teecolor,innerleftmargin=2pt,innerrightmargin=2pt,innerbottommargin=2pt,leftmargin=-8pt,rightmargin=12pt]
             $\triangleright$ Step 2: Computations in TEE
        \begin{mdframed}[backgroundcolor=whitecolor,innertopmargin=4pt,innerbottommargin=4pt,leftmargin=0pt,rightmargin=1pt,innerleftmargin=-4pt]
            \quad $\bullet$ $\mathsf{PDec}(sk_2, \lb d_1 \rb) \to [d_1]_2, \mathsf{TDec}([d_1]_1, [d_1]_2) \to d_1$ and \\\hangindent=18pt
            $\mathsf{PDec}(sk_2, \lb d_2 \rb) \to [d_2]_2, \mathsf{TDec}([d_2]_1, [d_2]_2) \to d_2$;

            \hangindent = -132pt
            \vspace{-12pt}
            \begin{flalign*}
	        \bullet \begin{cases}
	            \mu_0 = 0, &d_1 > \frac{N}{2}; \\
	            \mu_0 = 1, &d_1 < \frac{N}{2};\\
	            \mu_0' = 0, &d_2 > \frac{N}{2}; \\
	            \mu_0' = 1, &d_2 < \frac{N}{2};	
	        \end{cases}
	        \end{flalign*}

            \quad $\bullet$ $\lb R \rb \gets \lb m_2 \rb \cdot \lb m_3 - m_2 \rb^{\mu_0 + \mu_0'} \cdot \lb \pi_1 \cdot (m_3 - m_2) \rb^{1 - 2\mu_0} \cdot$ \\ \hangindent=49pt
            $ \lb \pi_2 \cdot (m_3 - m_2) \rb^{1 - 2\mu_0'} \cdot \enc(pk, 0)$;
        \end{mdframed}
        \end{mdframed}

        \vspace{-8pt}

        \begin{mdframed}[backgroundcolor=whitecolor,innerleftmargin=4pt,innertopmargin=0pt,innerrightmargin=2pt,innerbottommargin=-2pt,leftmargin=-8pt,rightmargin=12pt]
            REE \qquad $\sendmessageleft*{\lb R \rb}$ TEE
        \end{mdframed}

        \vspace{-10pt}
        
        \begin{mdframed}[backgroundcolor=reecolor,innerleftmargin=2pt,innerrightmargin=2pt,innerbottommargin=2pt,leftmargin=-8pt,rightmargin=12pt]
            $\triangleright$ Step 3: Computations in REE 
        \begin{mdframed}[backgroundcolor=whitecolor,innertopmargin=4pt,innerbottommargin=4pt,leftmargin=0pt,rightmargin=1pt,innerleftmargin=-4pt]
            \quad $\bullet$ $\mathcal{F}_{\mathsf{trn}} \gets \lb R \rb$;
        \end{mdframed}
        \end{mdframed}
        \vspace{-5pt}
\end{algorithm}

\section{\texttt{SEAT} Details}\label{SEAT Details}
In this section, we describe \texttt{SEAT}, a secure data trading solution based on our proposed TRUST. Data trading is key to driving the digital economy \cite{abla2024fair,an2023secure}. \texttt{SEAT} aims to answer the following two questions faced by existing data trading:

\textit{(1) Can the trading data be prevented from being copied?}

\textit{(2) If the answer for the first question is no, can the trading data be prevented from data resale?}

Our proposed \texttt{SEAT} argues that SOC is one of feasible methods to prevent from data resale. Roughly speaking, \texttt{SEAT} trades data use right instead of data ownership through SOC.

Fig. \ref{fig:seat} shows the system architecture of \texttt{SEAT} that consists of three entities: data seller, data broker, and data buyer. The data seller owning data sells the data use right to the data buyer with the help of the data broker. The workflow of \texttt{SEAT} involves following steps.

\ding{182} The data seller initializes system keys including a public/private key pair $(pk, sk)$ and two pairs of threshold keys $(sk_1, sk_2)$ and $(\sk_1, \sk_2)$. After that, the data seller encrypts data item by item through calling \enc, and transmits ciphertexts and corresponding keys to the data broker.

\ding{183} The data buyer submits data usage requirements to the data broker.

\ding{184} The data broker performs computations as required by the data buyer through calling SOC operations of TRUST, and then returns encrypted result $\lb \mathcal{R} \rb$ to the data seller.

\ding{185} The data seller calculates $\mathsf{PDec}(\sk_1, \lb \mathcal{R} \rb) \to [\mathcal{R}]_1$ and sends $\langle \lb \mathcal{R} \rb, [\mathcal{R}]_1, \sk_2 \rangle$ to the data buyer.

\ding{186} The data buyer calls $\mathsf{PDec}(\sk_2, \lb \mathcal{R} \rb) \to [\mathcal{R}]_2$ and $\mathsf{TDec}([\mathcal{R}]_1, [\mathcal{R}]_2)$ to obtain the final result $\mathcal{R}$.

From the workflow of \texttt{SEAT}, we see that either the data broker or the data buyer fails to learn the origin data of the data seller. In other words, no one of the data broker and the data buyer can resell the origin data from the the data seller. Although the data broker can copy the encrypted origin data, he is only able to perform operations over ciphertexts and lacks decryption capabilities, thereby being incapable to provide any data usage service.

\begin{figure}[t]
    \centering
    \includegraphics[width=\linewidth]{ 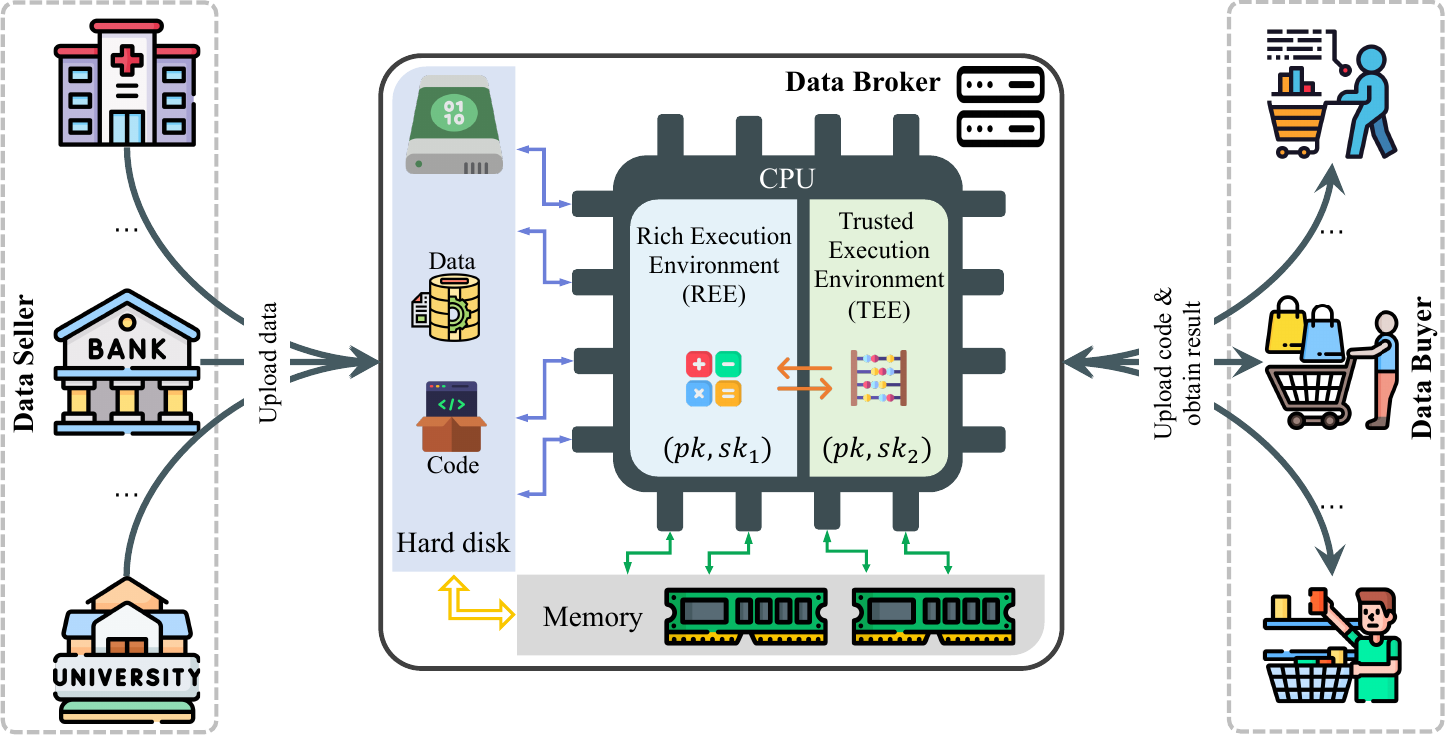}
    \caption{System architecture of \texttt{SEAT}.}
    \label{fig:seat}
\end{figure}

\section{Security Analysis}\label{Security Analysis}
In this section, we aim to demonstrate that TRUST does not leak any outsourced data to REE and TEE under the assumption of our threat model.

\begin{lemma}
$\dec(sk, \lb m \rb \cdot \enc(pk, 0)) = \dec(sk, \lb m \rb)$ is always true, and $\lb m \rb \cdot \enc(pk, 0) \neq \lb m \rb$ is also true.
\end{lemma}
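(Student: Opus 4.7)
The plan is to split the lemma into its two separate assertions and attack them with different tools: the equality of decryptions is a direct consequence of the additive homomorphism of FastPaiTD, while the inequality of the raw ciphertexts requires unfolding the explicit form of \enc\ and leveraging the fresh randomness it injects.

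For the first equality, I would invoke the additive homomorphism stated in Section~\ref{Preliminarys}, which gives $\dec(sk, \lb m_1 \rb \cdot \lb m_2 \rb \bmod N^2) = m_1 + m_2 \bmod N$. Instantiating with $m_1 = m$ and $m_2 = 0$ immediately yields $\dec(sk, \lb m \rb \cdot \enc(pk, 0)) = m + 0 \bmod N = \dec(sk, \lb m \rb)$. This step is a one-line application of a property already established for the cryptosystem and requires no additional machinery.

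For the second (syntactic) claim, I would expand both factors via Eq.~(\ref{eq:enc}). Writing $\lb m \rb = (1+N)^m \cdot (h^r \bmod N)^N \bmod N^2$ for the randomness $r$ fixed by the original encryption, and $\enc(pk, 0) = (h^{r'} \bmod N)^N \bmod N^2$ for a fresh, independently sampled $r' \sample \bin^{4\kappa}$, the product rearranges to $(1+N)^m \cdot (h^{r+r'} \bmod N)^N \bmod N^2$. Hence $\lb m \rb \cdot \enc(pk, 0)$ coincides with $\lb m \rb$ in $\mathbb{Z}_{N^2}$ only when $(h^{r'} \bmod N)^N \equiv 1 \bmod N^2$, i.e., when the fresh $r'$ happens to fall into the annihilator of $h^N$ in $\mathbb{Z}_{N^2}^{*}$.

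The main obstacle I expect is formal rather than computational: the inequality is not literally universal, since there is a negligible-probability bad event (namely, $r'$ being a multiple of the order of $h^N$) under which the two ciphertexts would in fact coincide. I would therefore read the lemma as holding with overwhelming probability over the coins of \enc, justified by the fact that $r'$ is drawn uniformly from a $4\kappa$-bit range while the annihilating set has size at most polynomial in $\kappa$. This probabilistic reading is precisely what the downstream protocols (Algorithms~\ref{alg:mul}, \ref{alg:abs}, \ref{alg:trn}) implicitly rely on when they multiply in $\enc(pk, 0)$ in order to re-randomize an intermediate ciphertext before handing it back to REE.
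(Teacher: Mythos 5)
Your proposal follows essentially the same route as the paper: the first claim is the additive homomorphism instantiated at $m_2 = 0$, and the second claim is the expansion of $\enc(pk,0)$ via Eq.~(\ref{eq:enc}) to exhibit the extra factor $(h^{r'} \bmod N)^N$. The one place you diverge is actually an improvement in rigor: the paper simply asserts that $(h^{r} \bmod N)^N \neq 1$ is ``obvious,'' whereas you correctly observe that the inequality fails in the negligible-probability event that $r'$ annihilates $h$ (e.g., $r' = 0$ or a multiple of $\mathrm{ord}(h)$), so the second assertion should be read as holding with overwhelming probability over the coins of $\enc$ rather than unconditionally.
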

\begin{proof}
We firstly prove that $\dec(sk, \lb m \rb \cdot \enc(pk, 0)) = \dec(sk, \lb m \rb)$ is true. According to the additive homomorphism of FastPaiTD, we have
\begin{equation}
    \lb m \rb \cdot \enc(pk, 0) = \lb m + 0 \rb.
\end{equation}
And since $m + 0 = m$ is always true $\forall m \in \ZZ_N$, $\dec(sk, \lb m \rb \cdot \enc(pk, 0)) = m$. Thus, $\dec(sk, \lb m \rb \cdot \enc(pk, 0)) = \dec(sk, \lb m \rb)$ is always true.

According to Eq. (\ref{eq:enc}), we have
\begin{align}
    \lb m \rb \cdot \enc(pk, 0) = &\lb m \rb \cdot (1 + N)^0 \cdot (h^r \md N)^N \nonumber\\
    =& \lb m \rb \cdot (h^r \md N)^N,
\end{align}
where $r \sample \bin^{4\kappa}$. Obviously, $(h^r \md N)^N \neq 1$, so $\lb m \rb \cdot \enc(pk, 0) \neq \lb m \rb$.
\end{proof}

\begin{lemma}
\label{lm:mr}
When $m \in (-2^\ell, 2^\ell)$, $r \sample \bin^\sigma$, and $2^{\sigma - \ell + 2}$ is a negligible function, $m + r$ is a chosen-plaintext attack secure one-time key encryption scheme, where $m$ and $r$ are the plaintext to be encrypted and the key, respectively.
\end{lemma}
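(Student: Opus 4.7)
The plan is to show indistinguishability under a chosen-plaintext (equivalently, for a one-time-key scheme, a chosen two-message) attack by bounding the statistical distance between the ciphertext distributions induced by any two admissible plaintexts. I will treat the scheme as $\mathsf{Enc}_r(m) = m + r$ with key $r \sample \bin^\sigma$ and message space $\mathcal{M} = \{m \in \mathbb{Z} : -2^\ell < m < 2^\ell\}$, and argue that a single key-usage restriction turns CPA-security into the statement that for all $m_0, m_1 \in \mathcal{M}$, $\{m_0 + r\}_r \approx_s \{m_1 + r\}_r$.

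First, I would make the reduction explicit: in the one-time-key IND-CPA game the adversary $\adv$ commits to two messages $m_0, m_1 \in \mathcal{M}$, receives $c^* = m_b + r$ for $b \sample \bin$ and fresh $r \sample \bin^\sigma$, and outputs a guess $b'$. Because the key is used only once, no further encryption oracle helps, so $\advantage{\mathrm{cpa}}{\adv} \leq \statDist{m_0 + r}{m_1 + r}$. Next I would compute this statistical distance directly. The random variable $m_b + r$ is uniform on the interval $I_{m_b} = [m_b, m_b + 2^\sigma - 1]$, each of size $2^\sigma$, and the symmetric difference $I_{m_0} \triangle I_{m_1}$ has size exactly $2|m_0 - m_1| \leq 2 \cdot (2 \cdot 2^\ell - 2) < 2^{\ell + 2}$. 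Therefore
\begin{equation}
\statDist{m_0 + r}{m_1 + r} \;=\; \tfrac{1}{2} \cdot \tfrac{|I_{m_0} \triangle I_{m_1}|}{2^\sigma} \;<\; 2^{\ell - \sigma + 1},
\end{equation}
which is bounded above by the quantity $2^{\sigma - \ell + 2}$ (read as $2^{\ell - \sigma + 2}$) assumed negligible in the statement, so $\advantage{\mathrm{cpa}}{\adv}$ is negligible for every, even unbounded, $\adv$.

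Finally I would collect the pieces into a concluding paragraph: since $\adv$ is arbitrary and the bound holds uniformly in $m_0, m_1$, the scheme achieves (statistical) one-time IND-CPA security, and hence in particular the computational variant. I would also note that this suffices for the way $m + r$ is used inside the TRUST protocols (e.g., the mask $m_1 + r$ leaving REE in Algorithm~\ref{alg:mul}), because the mask is freshly resampled each invocation, so the one-time-key hypothesis is satisfied per call.

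The main obstacle is mostly bookkeeping rather than deep: I need to be careful that (i) the encryption ciphertext is treated as an integer (not reduced modulo something), since otherwise the support computation changes; (ii) the bound on $|m_0 - m_1|$ uses the open interval $(-2^\ell, 2^\ell)$ correctly; and (iii) the final comparison with the assumed negligible function $2^{\ell - \sigma + 2}$ matches the factor obtained from the symmetric-difference computation, leaving a safe slack so that polynomially many independent invocations remain secure via a standard hybrid argument.
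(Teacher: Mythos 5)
The paper does not actually contain a proof of this lemma: its entire ``proof'' is a pointer to the SOCI paper \cite{zhao2023soci}. Your argument therefore cannot be compared line-by-line against anything in this manuscript, but it is the standard (and, as far as I can tell, correct) statistical-distance argument that underlies the cited claim, and it supplies exactly the detail the paper omits. The key computation is right: $m_b+r$ is uniform on a length-$2^\sigma$ window starting at $m_b$, the symmetric difference of the two windows has size $2|m_0-m_1|<2^{\ell+2}$, so the statistical distance is $|m_0-m_1|\cdot 2^{-\sigma}<2^{\ell+1-\sigma}$, and since the key is used once the IND-CPA game collapses to a single-challenge distinguishing game whose advantage is bounded by that distance, even for unbounded adversaries. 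You are also right to flag that the exponent in the lemma statement, $2^{\sigma-\ell+2}$, must be a typo for $2^{\ell-\sigma+2}$ (with $\ell=32,\sigma=128$ the stated quantity would be astronomically large, not negligible), and your bound sits safely under the corrected quantity. Two minor caveats worth a sentence each if you write this up: first, in the actual protocols the sum is formed homomorphically and decrypted modulo $N$, so negative $m$ with very small $r$ would wrap around near $N$; the mod-$N$ reduction is a bijection on the supports and $2^\sigma+2^\ell\ll N$, so the symmetric-difference count, and hence the bound, is unchanged, but it deserves mention. Second, ``CPA secure one-time key encryption'' is a slightly awkward notion, and your explicit reduction of it to two-message indistinguishability under single key use is the right way to make the statement precise.
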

\begin{proof}
Refer to the work \cite{zhao2023soci} for the detail of the proof.
\end{proof}

\begin{theorem}
\label{thrm:mul}
Given $(\lb m_1 \rb, \lb m_2 \rb)$, $\mathcal{F}_{\mathsf{mul}}$ securely computes $\lb m_1 \cdot m_2 \rb$, and does not leak $m_1$, $m_2$, or $m_1 \cdot m_2$ to REE and TEE.
\end{theorem}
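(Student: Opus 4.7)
The plan is to split the claim into a correctness part and a security part. Correctness has essentially been argued already in the algorithm discussion via the additive and scalar-multiplication homomorphism of FastPaiTD, so I would open with a one-line reminder that $\lb m_2 \rb^{m_1+r}\cdot\lb -r\cdot m_2\rb\cdot\enc(pk,0)$ decrypts to $m_1\cdot m_2$, and then devote the body of the proof to the non-leakage claim via a real/ideal simulation-style argument, treating REE and TEE separately (in line with the threat model that an adversary corrupts at most one of them).

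For the REE side, I would list the view: the ciphertext inputs $\lb m_1\rb,\lb m_2\rb$, the locally generated randomness $r$, the derived ciphertexts $\lb m_1+r\rb$, $\lb -r\cdot m_2\rb$, the self-computed partial decryption $[m_1+r]_1$, and finally the returned $\lb R\rb$. Since REE holds only $sk_1$, any FastPaiTD ciphertext remains IND-CPA secure from REE's perspective; in particular $[m_1+r]_1$ is just $\lb m_1+r\rb^{sk_1}$, computable from $sk_1$ alone and hence no extra information beyond what REE already has. A simulator given $pk$ and $sk_1$ can therefore reproduce REE's view by sampling $r$ honestly and replacing each ciphertext that involves plaintext with a fresh encryption of $0$; indistinguishability follows from the semantic security of FastPaiTD, so $m_1$, $m_2$, and $m_1\cdot m_2$ remain hidden from REE.

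For the TEE side, the view consists of the messages received from REE, the partial decryption $[m_1+r]_2$ it computes, the recovered plaintext $m_1+r$, and the outgoing $\lb R\rb$. The only plaintext TEE ever sees is $m_1+r$, and by Lemma \ref{lm:mr} this is indistinguishable from a uniformly random $\sigma$-bit string (up to a negligible term $2^{\sigma-\ell+2}$), so it reveals nothing about $m_1$. The remaining objects $\lb m_2\rb$, $\lb -r\cdot m_2\rb$, and the self-assembled $\lb R\rb$ are ciphertexts under a key for which TEE possesses only $sk_2$; since TEE never obtains a partial decryption under $sk_1$ for any of them, they remain IND-CPA secure, and $m_2$ as well as $m_1\cdot m_2$ are hidden. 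A simulator with $(pk,sk_2)$ samples a uniform $\tau\in\bin^{\sigma}$ to play the role of $m_1+r$, sets $\lb m_1+r\rb\gets\enc(pk,\tau)$ (constructing a consistent fake $[m_1+r]_1$ by computing $\enc(pk,\tau)\cdot[m_1+r]_2^{-1}$), and replaces the other incoming ciphertexts with encryptions of $0$; indistinguishability from the real view reduces to IND-CPA security of FastPaiTD together with Lemma \ref{lm:mr}.

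The main obstacle will be the subtle linking concern on the TEE side: TEE has both the input ciphertext $\lb m_2\rb$ and the output ciphertext $\lb R\rb$ that it assembled itself, so one must rule out that TEE extracts $m_1\cdot m_2$ by correlating them. Here I would invoke Lemma 1: the explicit $\enc(pk,0)$ factor in the definition of $\lb R\rb$ re-randomizes the ciphertext so that, even though TEE knows the exponent $m_1+r$ applied to $\lb m_2\rb$, the resulting handle $\lb R\rb$ is distributed as a fresh encryption and carries no additional algebraic relation usable without $sk_1$. Combined with the fact that TEE never receives a partial decryption under $sk_1$ for $\lb R\rb$, $\lb m_2\rb$, or $\lb -r\cdot m_2\rb$, this closes the argument and the theorem follows.
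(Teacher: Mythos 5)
Your overall route is sound and somewhat more systematic than the paper's: you run a simulation-style argument with separate simulators for REE and TEE, whereas the paper argues step by step and only invokes an explicit indistinguishability experiment for the one message it considers dangerous. The ingredients match (Lemma \ref{lm:mr} for hiding $m_1$ inside $m_1+r$, semantic security of FastPaiTD for everything still encrypted). However, you have placed the one genuinely load-bearing idea in the wrong location. Your ``main obstacle'' paragraph spends the re-randomization of Lemma 1 on the TEE side, where it does no work: TEE assembles $\lb R \rb$ itself, already knows $m_1+r$, $\lb m_2\rb$, $\lb -r\cdot m_2\rb$, and the randomness of its own $\enc(pk,0)$ factor, so there is no correlation between $\lb m_2\rb$ and $\lb R\rb$ left for TEE to exploit --- TEE's inability to learn $m_2$ or $m_1\cdot m_2$ rests entirely on it never seeing a partial decryption under $sk_1$ of those ciphertexts, which you already say.

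The place the $\enc(pk,0)$ factor actually matters is the \emph{REE} view of the returned $\lb R \rb$, which is exactly the paper's Step 3 analysis. REE knows $r$, $\lb m_2 \rb$, and $\lb -r\cdot m_2\rb$, so if TEE returned the bare product $\lb m_2 \rb^{m_1+r}\cdot\lb -r\cdot m_2\rb$, REE could test any candidate $g$ for $m_1$ by recomputing the deterministic product $\lb m_2 \rb^{g+r}\cdot\lb -r\cdot m_2\rb$ and comparing ciphertexts; the paper formalizes this as REE winning the indistinguishability game with probability $1$. Only the fresh $\enc(pk,0)$ multiplier, whose randomness REE does not know, destroys this equality test and makes $\lb R\rb$ distributed as a fresh encryption. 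Your REE paragraph says ``indistinguishability follows from the semantic security of FastPaiTD'' and replaces $\lb R\rb$ by an encryption of $0$ in the simulator, but never names the recomputation attack nor the re-randomization that defeats it --- yet that is precisely what justifies the replacement. Move the Lemma 1 argument from your TEE discussion to the REE discussion of $\lb R\rb$ and the proof becomes correct and essentially equivalent to the paper's.
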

\begin{proof}
At Step 1, REE only performs operations over $\lb m_1 \rb$ and $\lb m_2 \rb$. As long as FastPaiTD is secure, no $m_1$, $m_2$, or $m_1 \cdot m_2$ is leaked to REE.

At Step 2, although TEE can learn $m_1 + r$, he fails to obtain $m_1$ according to Lemma \ref{lm:mr}. Thus, we say that $m_1$ is not leaked to TEE. And since FastPaiTD is secure, TEE cannot obtain $m_2$ and $m_1 \cdot m_2$.

At Step 3, REE can learn $\lb m_2 \rb^{m_1 + r} \cdot \lb -r \cdot m_2 \rb \cdot \enc(pk, 0)$. Given $\lb m_2 \rb^{m_1 + r} \cdot \lb -r \cdot m_2 \rb$, REE is easy to learn $m_1$ under knowing $\lb m_2 \rb$, $r$, and $\lb -r \cdot m_2 \rb$. According to the computational indistinguishability experiment \cite{zhao2023soci}, REE successfully wins the experiment. Specifically, REE randomly chooses $m_0$ and $m_1$ and sends them to a challenger. After that, the challenger $b \sample \bin$ and computes and returns $\lb m_2 \rb^{m_b+r} \cdot \lb -r \cdot m_2 \rb$ to REE. In this case, if $$\lb m_2 \rb^{m_b+r} \cdot \lb -r \cdot m_2 \rb = \lb m_2 \rb^{m_0+r} \cdot \lb -r \cdot m_2 \rb,$$ REE ouputs $b = 0$, while $$\lb m_2 \rb^{m_b+r} \cdot \lb -r \cdot m_2 \rb = \lb m_2 \rb^{m_1+r} \cdot \lb -r \cdot m_2 \rb,$$ REE ouputs $b = 1$. In other words, REE always successfully guesses $b$ and win the experiment.

However, as $\lb m \rb \cdot \enc(pk, 0) \neq \lb m \rb$ (See Lemma \ref{lm:mr}), $\lb m_2 \rb^{m_b + r} \cdot \lb -r \cdot m_2 \rb \cdot \enc(pk, 0)$ may be equal to $\lb m_2 \rb^{m_0 + r} \cdot \lb -r \cdot m_2 \rb \cdot \enc(pk, 0)$ or $\lb m_2 \rb^{m_1 + r} \cdot \lb -r \cdot m_2 \rb \cdot \enc(pk, 0)$. Thus, REE fails to guess $b$ with a probability greater than $1/2$. In other words, REE cannot win the experiment, so he fails to learn $m_1$. And since FastPaiTD is secure, $m_2$ and $m_1 \cdot m_2$ are not leaked to REE.
\end{proof}

\begin{theorem}
\label{thrm:cmp}
Given $(\lb m_1 \rb, \lb m_2 \rb)$, $\mathcal{F}_{\mathsf{cmp}}$ securely compares $\lb m_1 \rb$ and $\lb m_2 \rb$, and does not leak $m_1$, $m_2$, and $\mu$ to REE and TEE.
\end{theorem}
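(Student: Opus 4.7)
The plan is to follow the structure used in Theorem~\ref{thrm:mul}, walking through each of the three steps of Algorithm~\ref{alg:cmp} and showing that the view of REE (respectively TEE) at that step is computationally indistinguishable from a view that is independent of $m_1$, $m_2$, and $\mu$. At a high level, the security relies on three ingredients already available: (i) the IND-CPA security of FastPaiTD, (ii) the statistical hiding offered by the multiplicative mask $r_1$ together with the additive mask $r_2 \sample (\tfrac{N}{2}-r_1,\tfrac{N}{2})$, as used in the security analysis of the comparison protocol in \cite{zhao2023soci}, and (iii) the coin $\pi \sample \bin$ being sent to TEE only in encrypted form $\lb \pi \rb$.

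First I would handle Step~1. REE sees only the fresh ciphertexts $\lb m_1 \rb, \lb m_2 \rb$ and its own internal randomness $r_1, r_2, \pi$; every operation performed is a homomorphic combination producing new ciphertexts. By the IND-CPA security of FastPaiTD, its view is simulatable without knowledge of $m_1, m_2$, or $\mu$. Next I would analyze Step~2, which is the substantive part. TEE learns $d$ through $\mathsf{PDec}$ and $\mathsf{TDec}$, but by the construction of $r_1, r_2$ with $r_1 + r_2 > \tfrac{N}{2}$ and $r_2 < \tfrac{N}{2}$, the value $d \bmod N$ reveals nothing about $(m_1, m_2)$ except the single bit $\mu_0 = \mathbf{1}[d < N/2]$. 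The remaining objects in TEE's view, $\lb m_1 \rb, \lb m_2 \rb, \lb \pi \rb$, are encryptions whose semantic security is guaranteed by FastPaiTD. The key step is to observe that $\pi$ is uniformly random in $\bin$ and hidden from TEE, so for any fixed pair $(m_1, m_2)$ the induced distribution of $\mu_0$ over the randomness of $\pi$ is uniform on $\bin$; hence $\mu_0$ is statistically independent of $\mu$ from TEE's standpoint. Consequently TEE cannot recover $m_1$, $m_2$, or $\mu$.

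Finally, for Step~3 REE receives back the ciphertext $\lb R \rb = \lb \mu \rb$. Because REE holds only the partial key $sk_1$ and threshold decryption requires both shares, the ciphertext is semantically secure under the FastPaiTD assumption, so $\mu$ is not leaked. To complete the argument I would, as in the proof of Theorem~\ref{thrm:mul}, formalize this via a computational indistinguishability experiment: a PPT adversary controlling REE chooses $m_0', m_1'$, the challenger encrypts and runs the protocol to produce $\lb \mu_b \rb$, and any distinguisher with advantage non-negligibly above $1/2$ would break IND-CPA of FastPaiTD. A symmetric experiment handles TEE, where the simulator uses a uniform $\mu_0$ and freshly sampled ciphertexts.

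The main obstacle I expect is the TEE-side argument, specifically showing rigorously that observing $\mu_0$ together with $\lb \pi \rb$ gives no information about $\mu$. The pitfall is that $\mu_0$ is a function of both the secret relation between $m_1, m_2$ and of $\pi$, and one must be careful that the $\pi$ used inside $d$ is the same $\pi$ encrypted in $\lb \pi \rb$; the argument only goes through because $\lb \pi \rb$ is IND-CPA-secure under the public key whose full secret $\alpha$ is unknown to TEE (TEE holds only $sk_2$). I would therefore state this explicitly as a reduction: any TEE-side distinguisher for $\mu$ yields a distinguisher for $\lb \pi \rb$, contradicting the security of FastPaiTD.
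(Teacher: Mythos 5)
Your proposal is correct and follows essentially the same route as the paper's proof: a step-by-step analysis in which Step 1 and Step 3 are dispatched by the semantic security of FastPaiTD, the leakage of $d$ to TEE is bounded via a computational indistinguishability experiment exploiting the masks $r_1, r_2$ (the paper instantiates this concretely with worst-case plaintext pairs and the explicit bound $\frac{1}{2} + \frac{2^\ell}{2^\sigma}$), and the bit $\mu_0$ is argued to be useless to TEE because $\pi$ acts as a one-time pad on $\mu = \mu_0 + (1-2\mu_0)\cdot\pi$ and is only available as the ciphertext $\lb \pi \rb$. Your one-time-pad framing of the $\mu_0$ argument and the explicit reduction to the IND-CPA security of $\lb \pi \rb$ are, if anything, slightly more careful than the paper's informal treatment.
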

\begin{proof}
At Step 1, REE only performs operations over $\lb m_1 \rb$ and $\lb m_2 \rb$. As long as FastPaiTD is secure, $m_1$, $m_2$, and $\mu$ are leaked to REE.

At Step 2, TEE can learn $d$ and $\mu_0$, but it is easy to demonstrate $\mu_0$ and $d$ do leak $m_1$, $m_2$, and $\mu$. Specifically, according to Algorithm \ref{alg:cmp}, we have $\mu = \mu_0 + (1 - 2\mu_0) \cdot \pi$. As FastPaiTD is secure, TEE fails to learn $\pi$ under knowing $\lb \pi \rb$. Thus, given $\mu_0$ and $\lb \pi \rb$, $\mu$ is not leaked to TEE.

Now, we adopt the computational indistinguishability experiment to prove that $d$ does not disclose $m_1$ and $m_2$. Let TEE acts as an adversary. Then, TEE randomly generates $(m_{1,0}, m_{2, 0})$ and $(m_{1,1}, m_{2, 1})$ and sends them to a challenger. Mathematically, $(m_{1,0} = m_{2,0})$ and  $(m_{1,1} = -2^\ell + 1, m_{2,1} = 2^\ell - 1)$, TEE obtains the highest probability of guessing success. After that, the challenger generates $b, \pi \sample \bin$ and $r_1 \sample \bin^\sigma, r_2 \sample (\frac{N}{2} - r_1, \frac{N}{2})$, and computes
\begin{align}
d = \left\{\begin{aligned}
&r_1 + r_2, &\text{\ for\ }b=0, \pi = 0\\
&r_1 \cdot 0 + r_2, &\text{\ for\ }b=0, \pi = 1 \\
&r_1 \cdot (-2^{\ell + 1} + 1) + r_2, &\text{\ for\ }b=1, \pi = 0\\
&r_1 \cdot (2^{\ell + 1} - 2) + r_2. &\text{\ for\ }b=1, \pi = 1 
\end{aligned}\right.\end{align}
The challenger returns $d$ to TEE. Subsequently, TEE guesses $b' = 0$ or $b' = 1$. In this case, the probability of guessing success of TEE can be formulated as
\begin{equation}
    \Pr[b'=b|d] \leq \frac{1}{2} + \frac{1}{2} \cdot \frac{1 - (2^{\ell + 1} + 1)}{2^\sigma} = \frac{1}{2} + \frac{2^\ell}{2^\sigma}.
\end{equation}
As $2^\ell \cdot 2^{-\sigma}$ is a negligible function \cite{zhao2023soci}, $\Pr[b'=b|d] \leq \frac{1}{2} + \negl[\sigma]$ holds. Thus, we say that The probability of guessing success of TEE is negligible. In other words, TEE fails to learn $m_1$ and $m_2$ from $d$.

At Step 3, REE only obtains an encrypted $\mu$, therefore, $\mu$ is not disclosed to REE.
\end{proof}

\begin{theorem}
Given $(\lb m_1 \rb, \lb m_2 \rb)$, $\mathcal{F}_{\mathsf{eql}}$ securely determines the relationship between $\lb m_1 \rb$ and $\lb m_2 \rb$, and does not leak $m_1$, $m_2$, and $\mu$ to REE and TEE.
\end{theorem}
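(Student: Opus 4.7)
The plan is to mirror the structure of the proof of Theorem \ref{thrm:cmp}, since $\mathcal{F}_{\mathsf{eql}}$ is essentially two parallel invocations of the comparison blinding technique whose outcomes are homomorphically added. I would proceed step by step through Algorithm \ref{alg:eql} and bound what each party---REE at Steps 1 and 3, and TEE at Step 2---can extract from its view. For Step 1, REE manipulates only $\lb m_1 \rb$ and $\lb m_2 \rb$ under the homomorphic operations of FastPaiTD, so by the semantic security of FastPaiTD already invoked in Theorems \ref{thrm:mul} and \ref{thrm:cmp}, neither $m_1$, $m_2$, nor the equality bit $\mu$ is revealed. Symmetrically, for Step 3, REE sees only the ciphertext $\lb R \rb$, so semantic security of FastPaiTD again forbids recovery of $\mu$.

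The Step 2 analysis is where the real work lies, and I would split it into two sub-claims. The TEE view at this step consists of the plaintext blinded values $d_1, d_2$, the decision bits $\mu_0, \mu_0'$, and the ciphertexts $\lb \pi_1 \rb, \lb \pi_2 \rb$. First, the relation $\mu = \mu_0 + (1-2\mu_0)\pi_1 + \mu_0' + (1-2\mu_0')\pi_2$ shows that $\mu$ is fully blinded by $\pi_1,\pi_2$; since these bits are handed to TEE only in encrypted form, semantic security of FastPaiTD prevents TEE from recovering them and hence from recovering $\mu$. Second, $m_1$ and $m_2$ are hidden by invoking the computational-indistinguishability experiment used in the proof of Theorem \ref{thrm:cmp} twice---once against $d_1$, once against $d_2$. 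Because the blinding randomness $(r_1,r_2,\pi_1)$ for $d_1$ and $(r_1',r_2',\pi_2)$ for $d_2$ are independently sampled, the joint distribution of $(d_1,d_2)$ factors as a product of two independent blinded instances, and a standard hybrid argument then bounds the distinguishing advantage by roughly $2 \cdot 2^{\ell}/2^{\sigma}$, which is negligible in $\sigma$ by assumption.

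The main obstacle I expect is precisely the joint-distribution step: because the same unknowns $m_1,m_2$ appear inside both $d_1$ and $d_2$, one must rule out that their correlation gives TEE more leverage than a single blinded value would. The freshness and independence of the two randomness bundles is exactly what I will lean on to reduce this to two independent applications of the Theorem \ref{thrm:cmp} experiment, so the proof should modularize cleanly rather than require a new technical device. Once this is in place, combining the Step 1, Step 2, and Step 3 bounds via a union argument yields the claimed confidentiality of $m_1$, $m_2$, and $\mu$ against both REE and TEE.
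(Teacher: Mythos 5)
Your proposal follows essentially the same route as the paper's proof: Steps 1 and 3 are dispatched by the semantic security of FastPaiTD since REE handles only ciphertexts, and Step 2 is handled by observing that $\mu$ is masked by the encrypted bits $\pi_1,\pi_2$ while $m_1,m_2$ are protected by the indistinguishability experiment of Theorem \ref{thrm:cmp} applied to $d_1$ and $d_2$. The only difference is that you explicitly treat the joint distribution of $(d_1,d_2)$ via a hybrid argument over the independent randomness bundles, a point the paper's proof leaves implicit by simply citing Theorem \ref{thrm:cmp}; this is a useful tightening rather than a different approach.
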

\begin{proof}
At Steps 1 and 3, REE either performs operations over $\lb m_1 \rb$ and $\lb m_2 \rb$ or obtains an encrypted $\mu$, therefore, $m_1$, $m_2$, and $\mu$ are not leaked to REE.

At Step 2, although TEE can learn $d_1$ and $d_2$, according to Theorem \ref{thrm:cmp}, no $m_1$ or $m_2$ is disclosed to TEE. From Algorithm \ref{alg:eql}, we have $\mu = \mu_0 + (1 - 2\mu_0) \cdot \pi_1 + \mu_0' + (1 - 2\mu_0') \cdot \pi_2$. Thus, TEE cannot obtain $\mu$ as $\pi_1$ and $\pi_2$ is unknown.
\end{proof}

\begin{theorem}
Given $\lb m_1 \rb$, $\mathcal{F}_{\mathsf{abs}}$ securely computes $\lb |m_1| \rb$, and does not leak $|m_1|$ and $m_1$ to REE and TEE.
\end{theorem}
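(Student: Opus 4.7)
The plan is to mirror the step-by-step analysis used for Theorems \ref{thrm:mul} and \ref{thrm:cmp}: walk through the three stages of $\mathcal{F}_{\mathsf{abs}}$ in order, bound the information accessible to the corrupted party at each stage, and rely on the semantic security of FastPaiTD together with the masking arguments already stated as Lemma 1 and Lemma \ref{lm:mr}.

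Steps 1 and 3 are short. At Step 1, REE applies only homomorphic operations and a partial decryption under $sk_1$ to $\lb m_1 \rb$ and to fresh encryptions $\enc(pk, r_1 + r_2)$, $\enc(pk, r_2)$, and $\enc(pk, 0)$; by IND-CPA security of FastPaiTD together with the fact that $sk_1$ alone cannot decrypt, REE learns nothing about $m_1$ or $|m_1|$. At Step 3, REE receives only the ciphertext $\lb R \rb$ returned by TEE, so semantic security again suffices.

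The substantive work is Step 2, where TEE recovers the plaintext $d$ via \pdec and \tdec and derives the comparison bit $\mu_0$, while $\lb m_1 \rb$ and $\lb \pi \cdot m_1 \rb$ remain encrypted. To show $d$ does not leak $m_1$, I would replay the computational indistinguishability experiment from the proof of Theorem \ref{thrm:cmp}: let TEE select worst-case candidates in $(-2^\ell, 2^\ell)$, have the challenger sample $b, \pi \sample \bin$, $r_1 \sample \bin^\sigma$, $r_2 \sample (\frac{N}{2} - r_1, \frac{N}{2})$, compute $d$ according to Eq.~(\ref{eq:absdv}), and bound the resulting guessing advantage by $2^\ell / 2^\sigma = \negl[\sigma]$. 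For the bit $\mu_0$, I would observe that because $\pi$ is uniform and independent of $m_1$, and the two branches of Eq.~(\ref{eq:absdv}) flip the sign of the multiplier of $m_1$, the induced distribution of $\mu_0$ is uniform over $\bin$ in TEE's view for every fixed value of $m_1$, so $\mu_0$ conveys neither $m_1$ nor its sign.

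The main obstacle I anticipate is ruling out a correlation between $\mu_0$ and the ciphertext $\lb \pi \cdot m_1 \rb$: since Eq.~(\ref{eq:eqlr}) expresses $|m_1| = (1 - 2\mu_0) m_1 + (4\mu_0 - 2)\pi m_1$, a curious TEE might hope to combine $\mu_0$ with the plaintext hidden inside $\lb \pi \cdot m_1 \rb$ to recover $\pi$, and therefore the sign of $m_1$. I plan to close this gap by a direct reduction to IND-CPA of FastPaiTD: any efficient algorithm that extracts $\pi$ (or equivalently $\pi \cdot m_1$) from $\lb \pi \cdot m_1 \rb$, even when conditioned on $\mu_0$ and on the full Step 1 transcript, yields an IND-CPA adversary, so no such leakage is feasible. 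Finally, neither party ever holds $|m_1|$ in plaintext: TEE computes only the rerandomized ciphertext $\lb R \rb$ (fresh thanks to the $\enc(pk, 0)$ factor, per Lemma 1) and REE receives only that ciphertext, so semantic security completes the $|m_1|$ part of the claim.
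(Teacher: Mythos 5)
Your proposal is correct and follows essentially the same route as the paper's own proof: Steps 1 and 3 are dispatched by the semantic security of FastPaiTD since REE only ever holds ciphertexts, and Step 2 reduces the analysis of $d$ and $\mu_0$ to Lemma \ref{lm:mr} and the indistinguishability experiment of Theorem \ref{thrm:cmp}, then uses the identity $|m_1| = (1-2\mu_0)\cdot m_1 + (4\mu_0-2)\cdot \pi\cdot m_1$ to conclude that $|m_1|$ stays hidden because $\pi$ and $m_1$ do. The only difference is that you explicitly close the potential correlation between $\mu_0$ and the ciphertext $\lb \pi \cdot m_1 \rb$ via an IND-CPA reduction, a point the paper leaves implicit in its appeal to FastPaiTD's security.
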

\begin{proof}
At Step 1, REE performs operations over $\lb m_1 \rb$, thus, $m_1$ is leaked to REE when FastPaiTD is secure.

At Step 2, according Lemma \ref{lm:mr} and Theorem \ref{thrm:cmp}, TEE fails to learn $\pi$ and $m_1$. And since $|m_1| = (1 - 2\mu_0) \cdot m_1 + (4\mu_0 - 2) \cdot \pi \cdot m_1$, $|m_1|$ is not disclosed to TEE due to no $\pi$ and $m_1$ being leaked.

At Step 3, REE only learns $\lb R \rb$, thus, as long as FastPaiTD is secure and Lemma \ref{lm:mr} holds, REE fails to obtain $|m_1|$.
\end{proof}

\begin{theorem}
Given $(\lb m_1 \rb, \lb m_2 \rb, \lb m_3 \rb)$, $\mathcal{F}_{\mathsf{trn}}$ securely computes $\lb m \rb$, and does not leak $m_1, m_2, m_3$ and $m$ to REE and TEE.
\end{theorem}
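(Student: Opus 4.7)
The plan is to mirror the structure of the proofs of Theorems~\ref{thrm:mul}--\ref{thrm:cmp} and the preceding theorem on $\mathcal{F}_{\mathsf{abs}}$, dividing the argument into the three steps of Algorithm~\ref{alg:trn} and showing that at each step every quantity held by REE or TEE is either encrypted under a semantically secure FastPaiTD, or is statistically masked by fresh randomness, or has been proved non-leaking by an already-established lemma or theorem.

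\textbf{Step 1 (REE).} Here REE only manipulates the input ciphertexts $\lb m_1\rb,\lb m_2\rb,\lb m_3\rb$ via homomorphic operations and fresh encryptions of randomizers $r_1,r_2,r_1',r_2',\pi_1,\pi_2$, together with partial decryptions under $sk_1$. I would invoke the semantic security of FastPaiTD to conclude that $m_1,m_2,m_3$ (and hence $m$) remain hidden from REE, exactly as in the opening paragraph of the proofs of Theorems~\ref{thrm:mul}--\ref{thrm:cmp}.

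\textbf{Step 2 (TEE).} This is the main obstacle, since TEE legitimately sees the plaintexts $d_1$ and $d_2$ and can derive $\mu_0,\mu_0'$. I would argue in three sub-claims. (i) Neither $d_1$ nor $d_2$ discloses $m_1$: $d_1$ has exactly the form analyzed in Theorem~\ref{thrm:cmp} (comparing $m_1$ with the constant $1$), and $d_2$ is the symmetric variant with an additional public shift, so the same computational indistinguishability experiment bounds the TEE's advantage by $\negl[\sigma]$ using Lemma~\ref{lm:mr}. (ii) $m_2$ and $m_3$ are not disclosed: they appear only inside ciphertexts $\lb m_2\rb$, $\lb m_3-m_2\rb$, $\lb\pi_1(m_3-m_2)\rb$, $\lb\pi_2(m_3-m_2)\rb$, so semantic security of FastPaiTD suffices. (iii) $\mu$ (and therefore $m$) is not disclosed: from Algorithm~\ref{alg:trn} the selector satisfies $\mu=(\mu_0+\mu_0')+(1-2\mu_0)\pi_1+(1-2\mu_0')\pi_2$, and since $\pi_1,\pi_2$ are uniformly random bits known to TEE only through $\lb\pi_1\rb,\lb\pi_2\rb$ (via the factors $\lb\pi_j(m_3-m_2)\rb$), TEE cannot deduce $\mu$ from $\mu_0,\mu_0'$ alone.

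\textbf{Step 3 (REE).} At this stage REE only learns $\lb R\rb$ sent back from TEE. Applying Lemma~1 (the blinding by $\enc(pk,0)$ is semantically meaningful even when $R$ coincides with a previously known value) together with the semantic security of FastPaiTD, I would conclude that $m$ is not leaked to REE, following the same one-line argument used at the end of the previous theorem.

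Putting the three steps together yields correctness (already shown just before the theorem statement, so it can be quoted) and the required non-leakage for both REE and TEE, which completes the proof. The only genuinely new technical piece is sub-claim~(i) of Step~2 for $d_2$, where I expect to rework the indistinguishability experiment of Theorem~\ref{thrm:cmp} with the adversary's optimal queries adjusted for the shifted form $r_1'\cdot(1-m_1)+r_2'$ or $r_1'\cdot(m_1-1)+r_2'$; the bound $\Pr[b'=b\mid d_2]\le \tfrac12+2^{\ell}/2^{\sigma}$ carries over verbatim.
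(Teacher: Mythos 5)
Your proposal is correct and follows essentially the same route as the paper's proof: the same three-step decomposition, the appeal to FastPaiTD's semantic security for everything REE sees, the reduction of $d_1,d_2$ to the indistinguishability argument of Theorem~\ref{thrm:cmp} together with Lemma~\ref{lm:mr}, and the observation that $\mu$ stays hidden from TEE because $\pi_1,\pi_2$ are only available in encrypted form. Your extra care in reworking the experiment for the shifted form of $d_2$ is a refinement the paper elides (it simply cites Theorem~\ref{thrm:cmp}), but it does not change the argument.
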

\begin{proof}
At Step 1, REE only performs operations over $\lb m_1 \rb$, $\lb m_2 \rb$ and $\lb m_3 \rb$, therefore, no $m_1, m_2$, or $m_3$ is leaked to REE under FastPaiTD being secure.

At Step 2, according Lemma \ref{lm:mr} and Theorem \ref{thrm:cmp}, TEE fails to learn $\pi_1,\pi_2$ and $m_1$. And since $m = m_2 + [(\mu_0 + \mu_0') + (1-2\mu_0)\cdot\pi_1 + (1-2\mu_0')\cdot\pi_2]\cdot(m_3 - m_2)$, $m$ is not leaked to TEE. Additionally, as long as FastPaiTD is secure, given $\lb m_2 \rb$ and $\lb m_3 - m_2 \rb$, no $m_2$ or $m_3$ is disclosed to TEE.

At Step 3, REE only obtains $\lb R \rb$, thus, $m$ is not leaked to REE under FastPaiTD being secure and Lemma \ref{lm:mr} holding.
\end{proof}

\begin{table*}[!t]
\centering
\caption{Comparison of Runtime and Storage Costs for Basic Cryptographic Primitive Operations (112-bit Security Level)}
\begin{threeparttable}
\rowcolors{2}{gray!25}{white} % 设置交替行颜色
\resizebox{\linewidth}{!}{ % 将表格调整为页面宽度
\begin{tabular}{c c c c c c c c c c c}
\toprule
Schemes & $\mathsf{KGen}$ & $\mathsf{Enc}$ & $\mathsf{Dec}$ & $\mathsf{PDec + TDec}$\(^1\) & Addition & Scalar-Multiplication & Subtraction & $pk$ & $sk$ & ciphertext \\
\midrule
TRUST & \textbf{61.779 ms} & \textbf{0.346 ms} & \textbf{1.617 ms} & 10.104 ms & \textbf{0.003 ms} & \textbf{0.021 ms} & \textbf{0.035 ms} & \textbf{0.500 KB} & \textbf{0.055 KB} & \textbf{0.500 KB} \\
SOCI\textsuperscript{+} \cite{zhao2024soci} & 62.334 ms & 0.347 ms & 1.619 ms & \textbf{9.401 ms} & 0.003 ms & 0.021 ms & 0.035 ms & 0.500 KB & 0.055 KB & 0.500 KB \\
SOCI \cite{zhao2023soci} & 75.776 ms & 7.076 ms & 7.072 ms & 15.005 ms & 0.003 ms & 0.021 ms & 0.035 ms & 0.500 KB & 0.250 KB & 0.500 KB \\
POCF \cite{liu2016privacy} & 74.978 ms & 7.053 ms & 7.041 ms & 14.928 ms & 0.003 ms & 0.021 ms & 0.035 ms & 0.500 KB & 0.250 KB & 0.500 KB \\
\bottomrule
\end{tabular}
}
\vspace{1mm}
\begin{tablenotes}
    \footnotesize
    \item[1] $\mathsf{PDec + TDec}$ refers to executing $\mathsf{PDec}$ twice (with keys $sk_1$ and $sk_2$, respectively) followed by one $\mathsf{TDec}$ operation.
\end{tablenotes}
\end{threeparttable}
\label{table:basic_operation_comparison}
\end{table*}
\begin{table*}[!ht]
\centering
\caption{Comparison of Computation and Communication Costs (112-bit Security Level)}
\begin{threeparttable}
\begin{tabular}{c cccc | cccc}
\toprule
\multirow{2}{*}{Protocols} & \multicolumn{4}{c|}{Computation costs (ms)} & \multicolumn{4}{c}{Communication costs (KB)} \\
\cmidrule{2-9}
 & TRUST & SOCI\textsuperscript{+}\cite{zhao2024soci} & SOCI \cite{zhao2023soci} & POCF \cite{liu2016privacy} & TRUST & SOCI\textsuperscript{+} \cite{zhao2024soci} & SOCI \cite{zhao2023soci} & POCF \cite{liu2016privacy} \\
\midrule
\textbf{$\mathcal{F}_{\mathsf{mul}}$} & \textbf{11.532} & 23.084 & 62.120 & 71.271 & \textbf{0} & 1.499 & 2.499 & 2.499 \\
\rowcolor{gray!25}
\textbf{$\mathcal{F}_{\mathsf{cmp}}$} & \textbf{11.476} & 24.941 & 46.801 & 47.791 & \textbf{0} & 1.499 & 1.499 & 1.499 \\
\textbf{$\mathcal{F}_{\mathsf{eql}}$} & \textbf{23.013} & 36.224 & 85.881 & 253.658 & \textbf{0} & 3.497 & 3.497 & 7.998 \\
\rowcolor{gray!25}
\textbf{$\mathcal{F}_{\mathsf{abs}}$} & \textbf{11.024} & 20.965 & 49.115 & 48.728 & \textbf{0} & 2.498 & 2.498 & 2.498 \\
\textbf{$\mathcal{F}_{\mathsf{trn}}$} & \textbf{22.004} & 36.228 & 70.913 & 70.322 & \textbf{0} & 4.497 & 4.497 & 4.497 \\
\bottomrule
\end{tabular}
\vspace{1mm}
\begin{tablenotes}
    \footnotesize
    \item \textbf{Note.} The computation costs is the sum of computation time and communication time. Also, we implement $\mathcal{F}_{\mathsf{eql}}$, $\mathcal{F}_{\mathsf{abs}}$, and $\mathcal{F}_{\mathsf{trn}}$ for SOCI\textsuperscript{+}, SOCI, and FOCF by using their own underlying protocols and the idea of TRUST.
\end{tablenotes}
\end{threeparttable}
\label{table:basic_protocol_comparison}
\end{table*}

\section{Experimental Evaluations}\label{Experimental Evaluation}
In this section, we extensively evaluate the proposed TRUST and \texttt{SEAT} based on TRUST. Note that all experiments are measured 1000 times and take their average as the experimental result.

\subsection{Experimental Setup}
\textbf{System Configuration}. 
To evaluate the performance of TRUST, we construct the framework using Intel\textsuperscript{\textregistered} SGX in hardware mode as a specific implementation of TEE. In detail, we implement TRUST\footnote{This work has been open-sourced and can be accessed at: \href{https://github.com/Aptx4869AC/TRUST}{https://github.com/Aptx4869AC/TRUST}} and \texttt{SEAT} in C++ and conduct experiments on a single server. The server runs a 64-bit Ubuntu 20.04 LTS operating system and is equipped with an Intel\textsuperscript{\textregistered} Xeon\textsuperscript{\textregistered} Silver 4410Y CPU @ 2.00 GHz, along with 128GB of memory, 64GB of which is allocated to the EPC.

\textbf{Network Configuration}. 
For the comparison schemes SOCI\textsuperscript{+} \cite{zhao2024soci}, SOCI \cite{zhao2023soci}, and POCF \cite{liu2016privacy}, we deploy two distinct programs on this server, each representing different entities (e.g., the cloud platform and the computation service provider) to simulate communication in a twin-server architecture within a LAN environment, with bandwidth reaching 46.4 Gbps (5.8 GB/s). In contrast, we simulate data transmission between trusted and untrusted environments by implementing secure interfaces (e.g., ECall and OCall functions provided by SGX Enclave). In our testbed, communication costs are defined as the volume of data transmitted between servers. However, in our proposed TRUST architecture, which operates on a single server, inter-server data transmission is avoided. Furthermore, the communication latency between trusted and untrusted regions remains minimal due to intra-server data transfer.
 
Note that TRUST also adopts the offline-online mechanism proposed by Zhao \textit{et al}. \cite{zhao2024soci} to accelerate the computations. In this work, REE and TEE environments independently generate tuples during the offline phase. Specifically, the encryptions of random numbers and certain constants are precomputed offline, while the online phase handles only remaining operations from the offline phase. For further details, refer to \cite{zhao2024soci}.

\subsection{Basic Cryptographic Primitive Operations Evaluations}
In this subsection, we evaluate the performance of basic cryptographic primitive operations (including $\mathsf{KGen}$, $\mathsf{Enc}$, $\mathsf{Dec}$, $\mathsf{PDec + TDec}$, and basic homomorphic operations) for TRUST and compare it with SOCI\textsuperscript{+} \cite{zhao2024soci}, SOCI \cite{zhao2023soci} and POCF \cite{liu2016privacy}. The comparison of runtime and storage costs for basic cryptographic primitive operations between different schemes are presented in Table \ref{table:basic_operation_comparison}, where the bit length of $N$ is 2048 to achieve 112-bit security. 

From Table \ref{table:basic_operation_comparison}, we see that TRUST outperforms SOCI and POCF in runtime and storage costs. Additionally, TRUST provides a similar performance with SOCI\textsuperscript{+} for basic cryptographic primitive operations.

\begin{figure}[]
    \centering
    \subfloat[Computation cost of $\mathcal{F}_{\mathsf{mul}}$]
    {
        \includegraphics[width=1.625in,height=1.4in]
        { 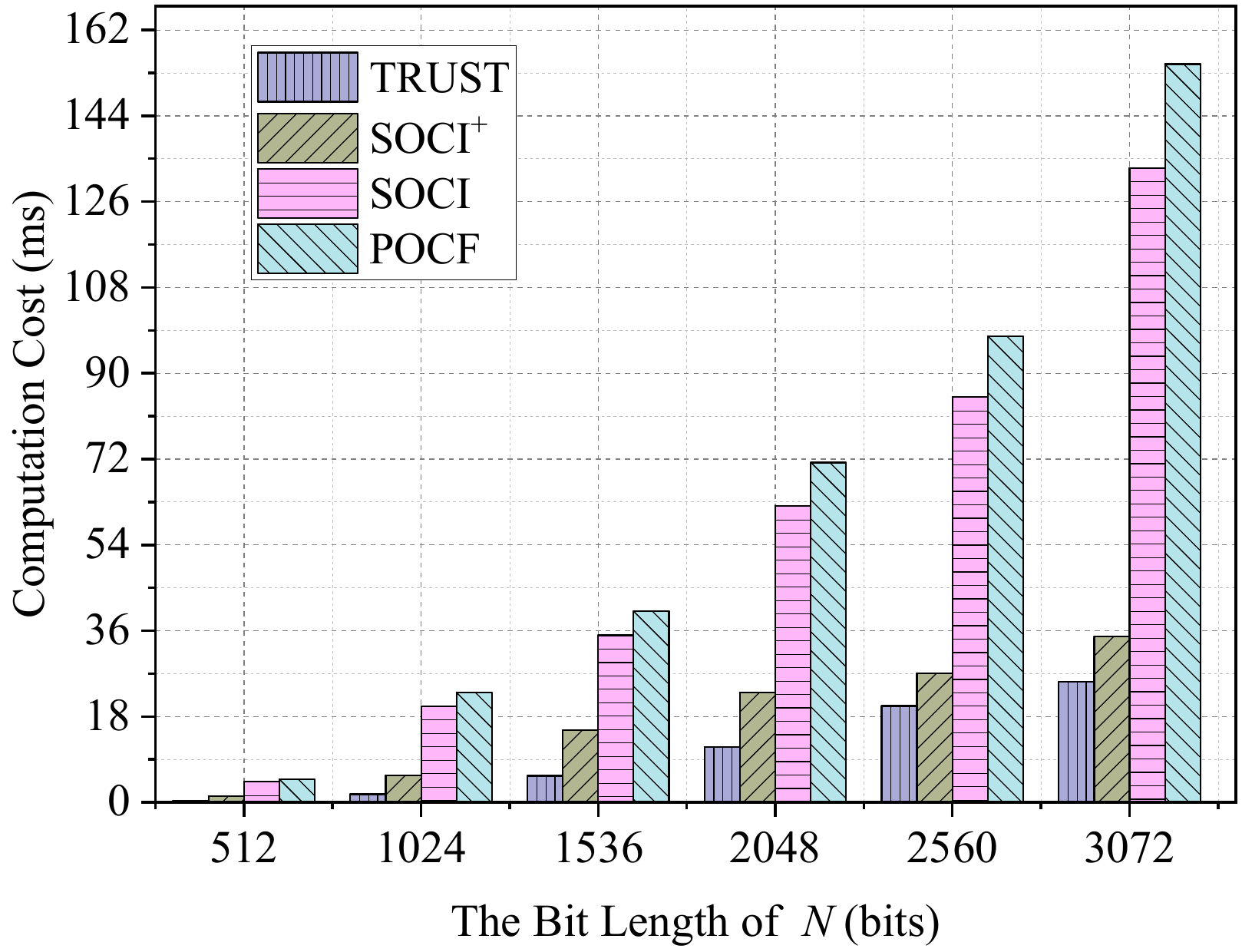}
    }
    \subfloat[Communication cost of $\mathcal{F}_{\mathsf{mul}}$]
    {
        \includegraphics[width=1.625in,height=1.4in]
        { 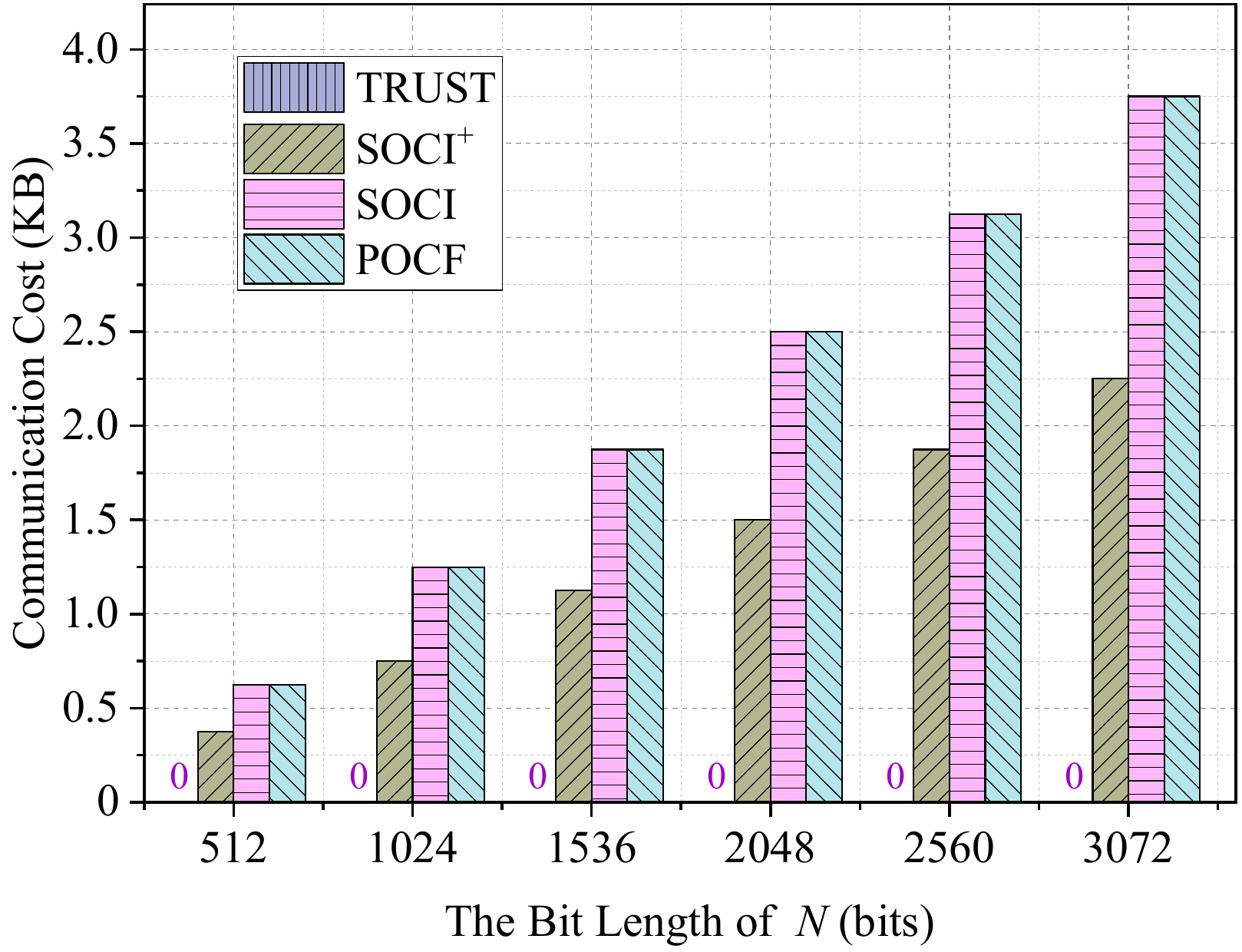}
    }
    \par\smallskip
    
    \subfloat[Computation cost of $\mathcal{F}_{\mathsf{cmp}}$]
    {
        \includegraphics[width=1.625in,height=1.4in]
        { 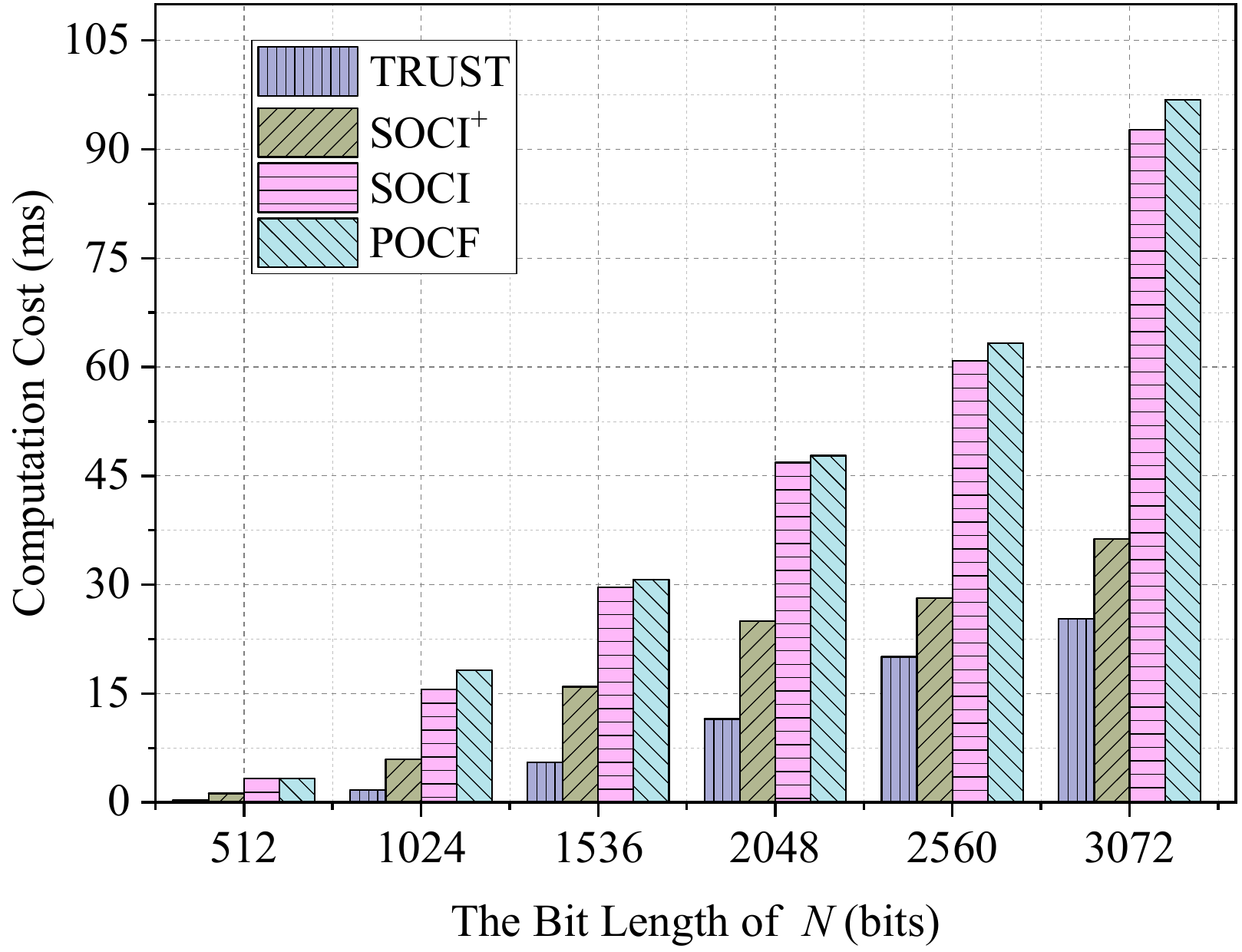}
    }
    \subfloat[Communication cost of $\mathcal{F}_{\mathsf{cmp}}$]
    {
        \includegraphics[width=1.625in,height=1.4in]
        { 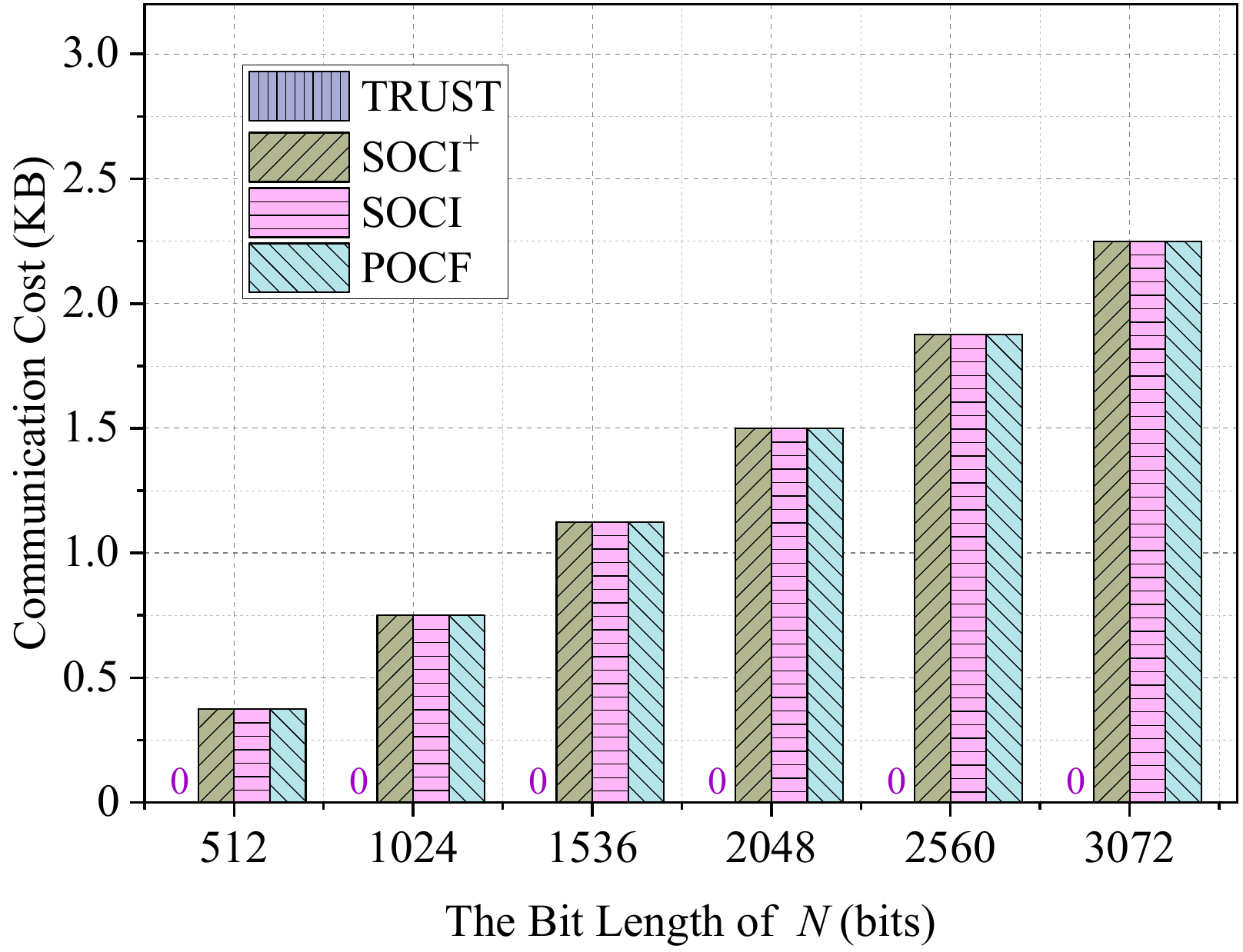}
    }
    \par\smallskip
    
    \subfloat[Computation cost of $\mathcal{F}_{\mathsf{eql}}$]
    {
        \includegraphics[width=1.625in,height=1.4in]
        { 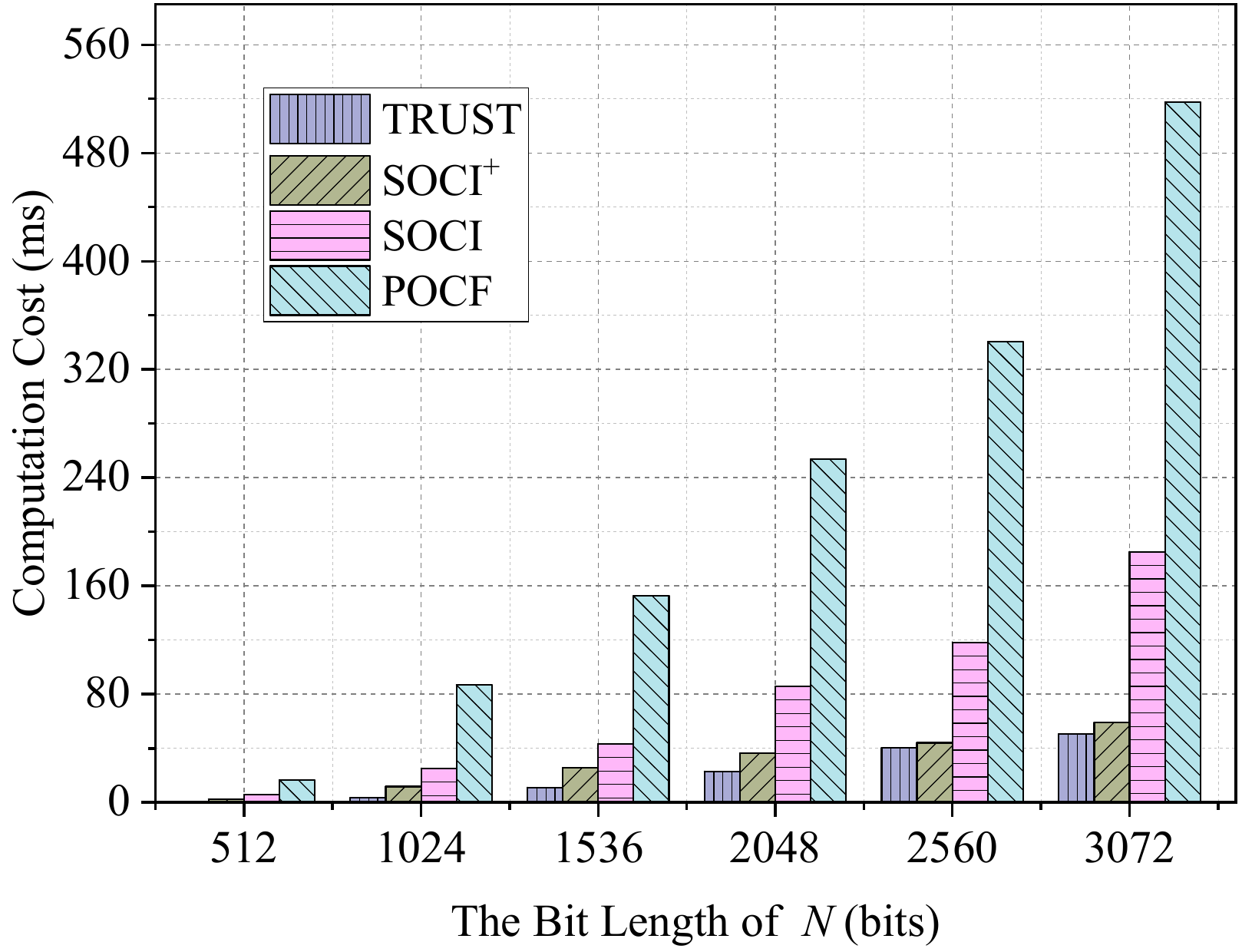}
    }
    \subfloat[Communication cost of $\mathcal{F}_{\mathsf{eql}}$]
    {
        \includegraphics[width=1.625in,height=1.4in]
        { 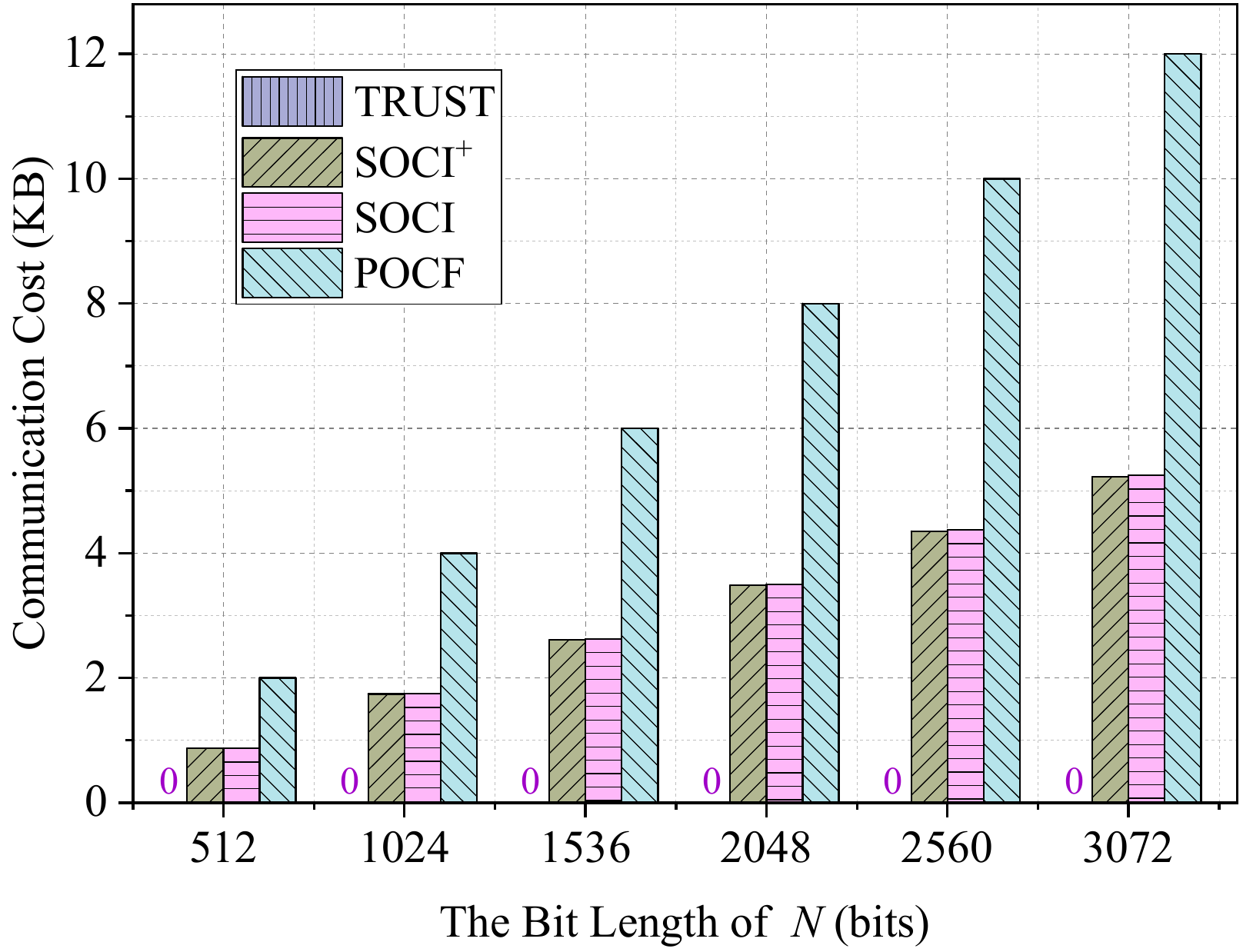}
    }

    \par\smallskip
    \subfloat[Computation cost of $\mathcal{F}_{\mathsf{abs}}$]
    {
        \includegraphics[width=1.625in,height=1.4in]
        { 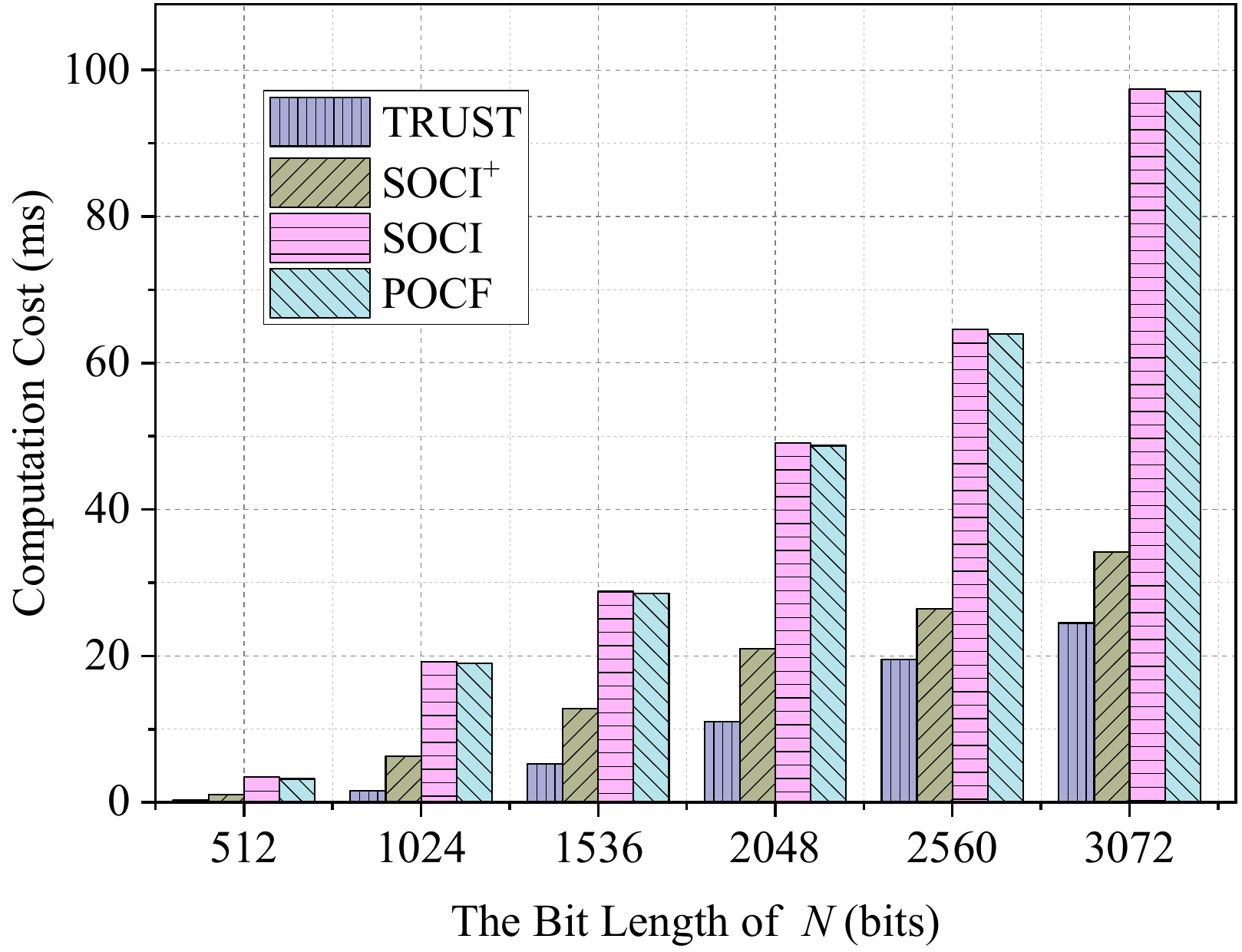}
    }
    \subfloat[Communication cost of $\mathcal{F}_{\mathsf{abs}}$]
    {
        \includegraphics[width=1.625in,height=1.4in]
        { 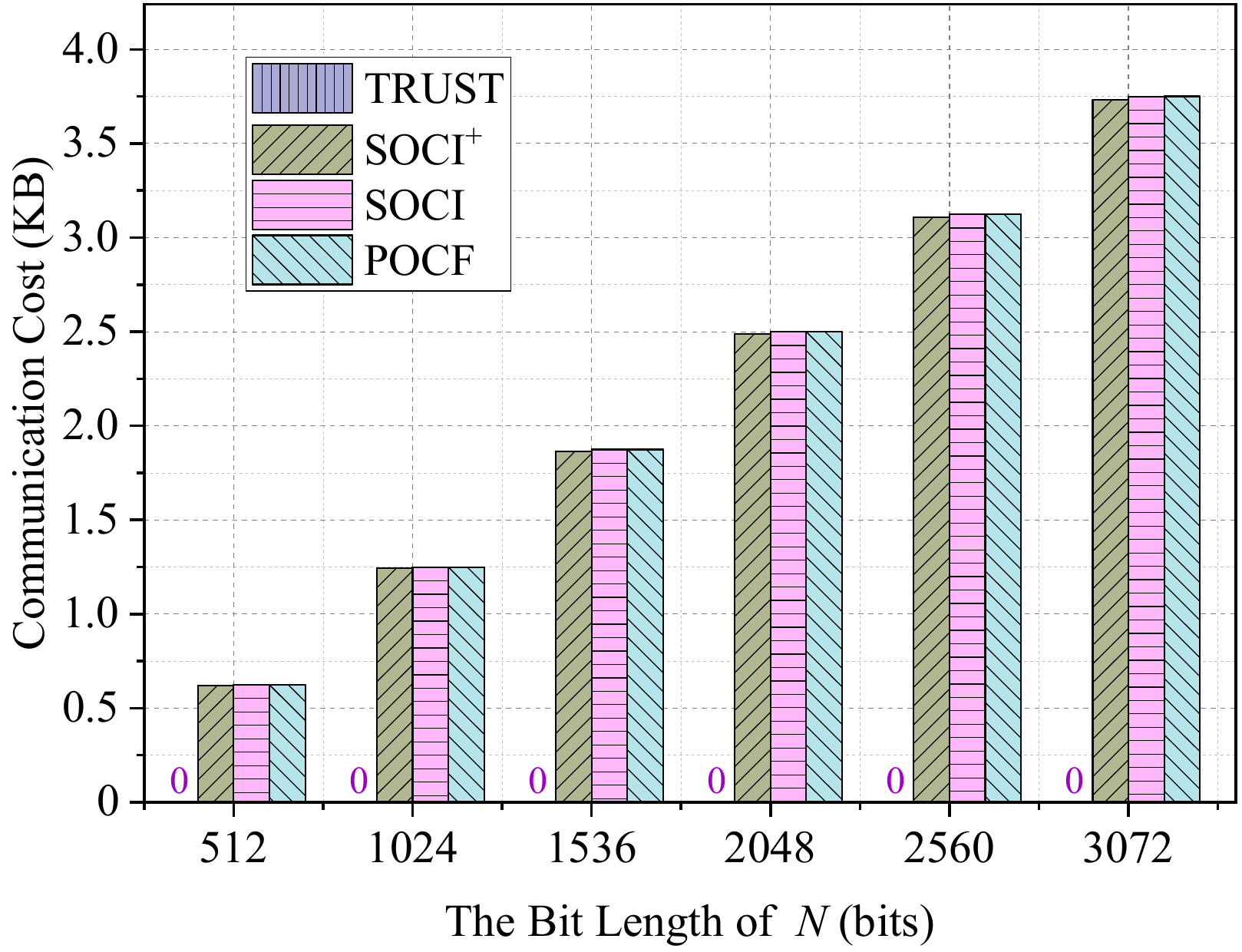}
    }

    \par\smallskip
    
    \subfloat[Computation cost of $\mathcal{F}_{\mathsf{trn}}$]
    {
        \includegraphics[width=1.625in,height=1.4in]
        { 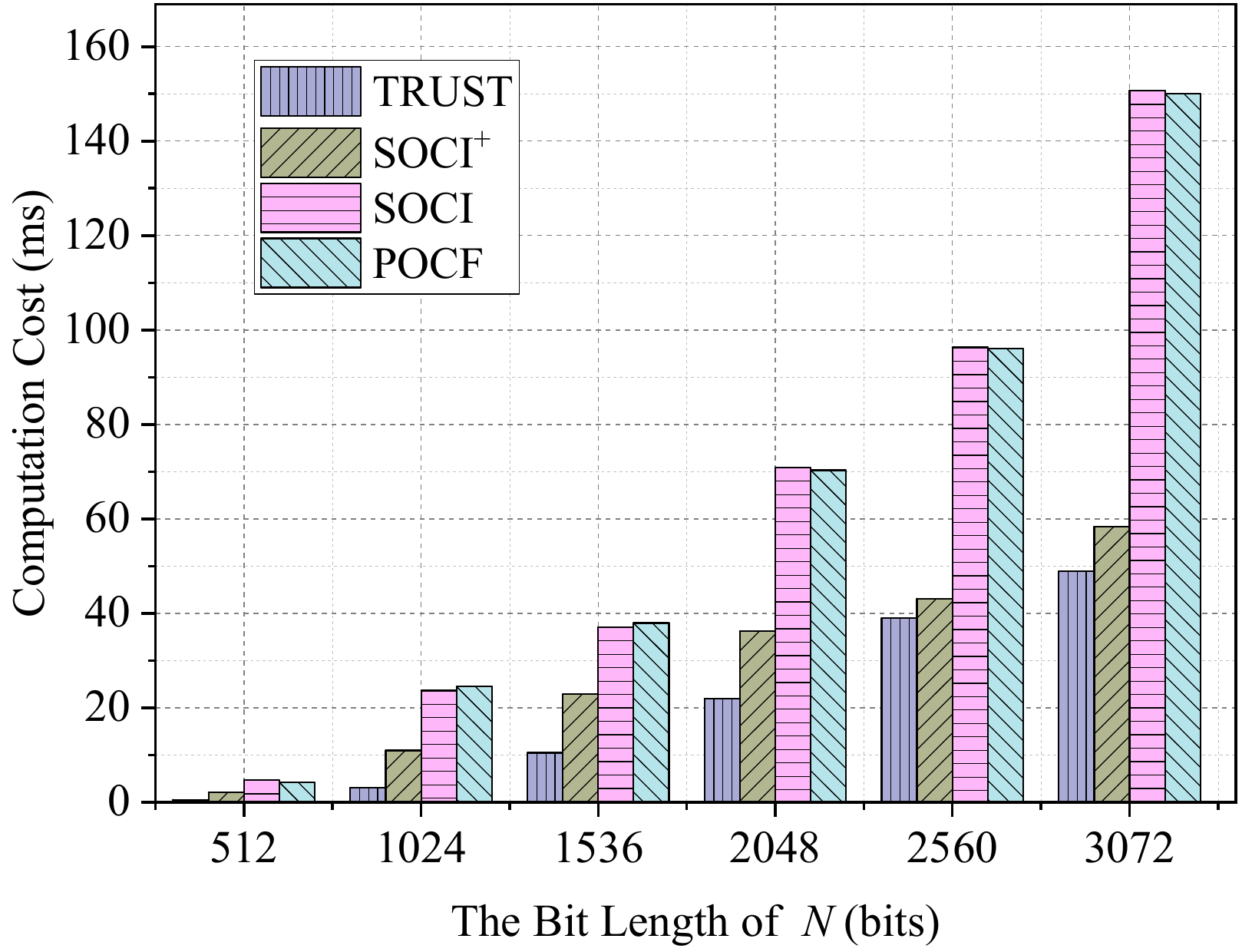}
    }
    \subfloat[Communication cost of $\mathcal{F}_{\mathsf{trn}}$]
    {
        \includegraphics[width=1.625in,height=1.4in]
        { 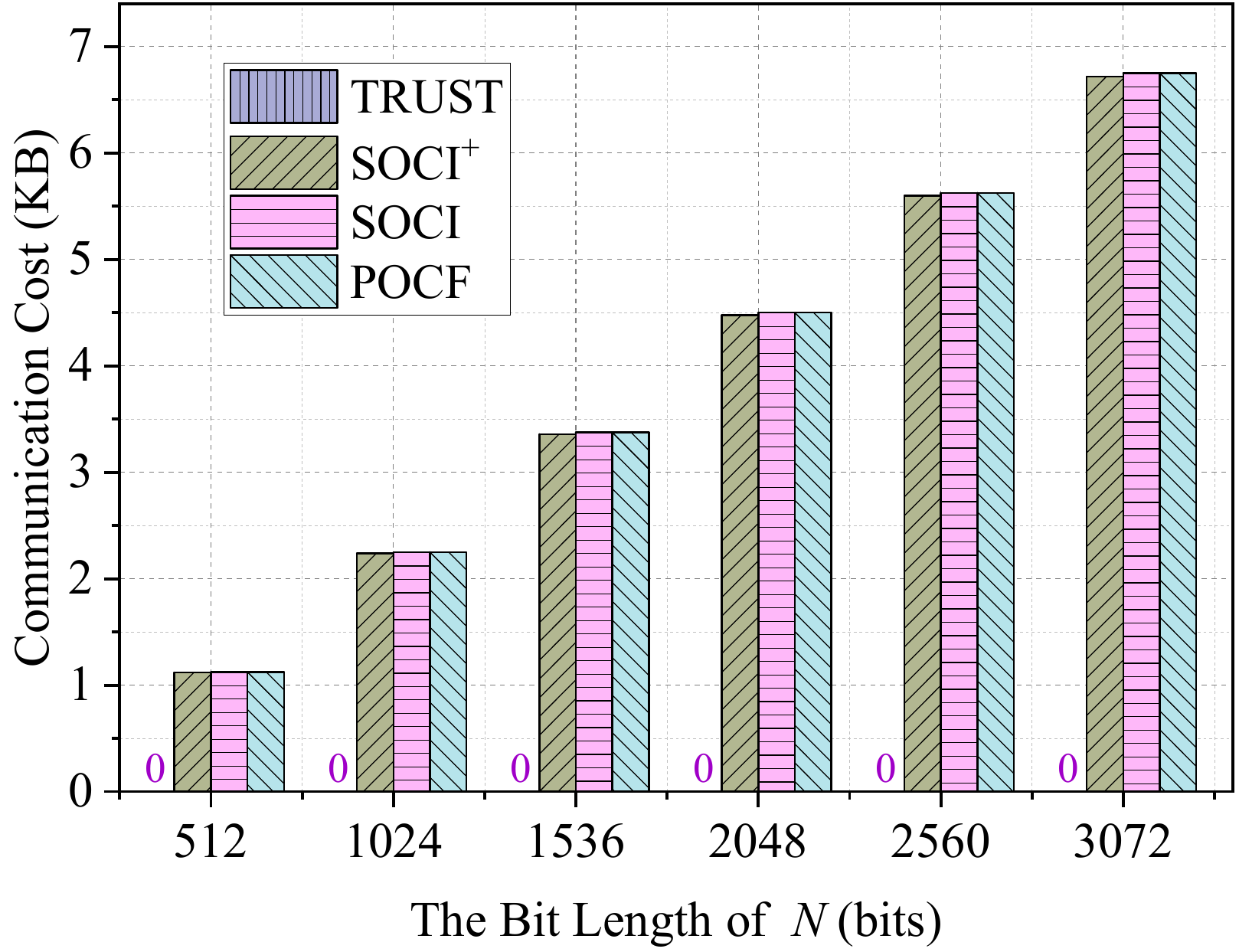}
    }
    \caption{Comparison of computation and communication costs for different schemes with a varying bit-length $N$ ($\ell = 32, \sigma = 128$).}
    \label{fig:basic_protocol_comparison}
\end{figure}

\subsection{Protocol Evaluations}
Table \ref{table:basic_protocol_comparison} presents the computation and communication costs of TRUST, SOCI\textsuperscript{+}, SOCI, and POCF at the 112-bit security level. The experimental results show that TRUST outperforms SOCI\textsuperscript{+}, SOCI, and POCF in both computation and communication costs. Specifically, experimental results indicate that TRUST improves computation costs by 1.6 - 2.1 times compared to SOCI\textsuperscript{+}, which is the state-of-the-art in secure multiplication and secure comparison protocols within the twin-server architecture based on FastPaiTD. One possible explanation is that TRUST designs a more efficient multiplication protocol and presents a more concise system architecture.

Additionally, as shown in Table \ref{table:basic_protocol_comparison}, TRUST demonstrates significant advantages in communication costs over SOCI\textsuperscript{+}, SOCI, and POCF across all protocols. The reason for this is that our experiments simulate real network communication within a LAN, where frequent serialization and deserialization during data transfer between different servers introduce additional performance overhead. This leads to increased communication latency, higher bandwidth consumption, and potential network congestion, all of which negatively affect overall system throughput and the efficiency of large-scale computations.
Thanks to our single-server architecture for TRUST, data transfer occurs within the same machine or with minimal context switching, eliminating these issues. 

As depicted in Fig. \ref{fig:basic_protocol_comparison}, we give the computation and communication costs of SOC protocols for TRUST, SOCI\textsuperscript{+}, SOCI, and POCF with a varying bit-lengths of $N$, providing an intuitive comparison. Fig. \ref{fig:basic_protocol_comparison} demonstrates that TRUST exhibits significant advantages, particularly for critical protocols such as $\mathcal{F}_{\mathsf{mul}}$ and $\mathcal{F}_{\mathsf{cmp}}$, with both requiring only around 12 ms at $|N| = 2048$. Moreover, the communication costs across all protocols in TRUST remain unaffected as the parameter $N$ increases, consistently showing as 0. Arguably, TRUST outperforms related solutions in both computation and communication costs, suggesting that TRUST would offer significant advantages in high-performance computing, and secure data processing applications.

\subsection{\texttt{SEAT} Evaluations}
To evaluate \texttt{SEAT}, we assume the data buyer submits a requirement about training a linear regression (LR) model. The data seller holds the raw data, including the feature matrix $\mathbf{X}$ and the target vector $\vec{y}$. And the data buyer holds the initial model, which includes the weights $w$, the bias $b$, and the learning rate $\eta$, all specific to LR. Finally, the data broker, which is a single TEE-equipped cloud server, is responsible for securely outsourcing the computation based on the requirements provided by the data buyer.

To show the effectiveness, we compare \texttt{SEAT} with SDTE \cite{dai2019sdte} using AES-GCM (Key: 256-bit; IV: 96-bit; Tag: 128-bit) and the \texttt{Baseline} without any privacy preservation over two open-source datasets: Student Performance\footnote{This dataset is available at: \href{https://www.kaggle.com/datasets/nikhil7280/student-performance-multiple-linear-regression}{https://www.kaggle.com/datasets/nikhil7280/student-performance-multiple-linear-regression}} and Fish Market\footnote{This dataset is available at: \href{https://www.kaggle.com/datasets/vipullrathod/fish-market}{https://www.kaggle.com/datasets/vipullrathod/fish-market}}. 

As shown in Table \ref{table:PPLinR}, \texttt{SEAT} generates the same result as the \texttt{Baseline} and SDTE under three common evaluation metrics. Fig. \ref{fig:LR_comparison} intuitively shows the performance during training LR between the \texttt{Baseline} and \texttt{SEAT}. In addition, from Fig. \ref{fig:LR_comparison}, we see that \texttt{SEAT} performs equally as the \texttt{Baseline} in terms of training. According to the results of Table \ref{table:PPLinR} and Fig. \ref{fig:LR_comparison}, we argue that \texttt{SEAT} is feasible.

\begin{table*}[!ht]
    \centering
    \caption{Cross-Dataset Feasibility Study and Performance Assessment of Baseline, SDTE, and SEAT}
    \begin{threeparttable}
    \resizebox{\linewidth}{!}{ % 将表格调整为页面宽度
    \begin{tabular}{c cc|c |ccc| ccc| ccc}
        \toprule
        \multirow{2}{*}{Dataset} & \multirow{2}{*}{Scale} & \multirow{2}{*}{Settings} & \multirow{2}{*}{Iterations} & \multicolumn{3}{c}{\texttt{Baseline}} & \multicolumn{3}{c}{SDTE \cite{dai2019sdte}} & \multicolumn{3}{c}{\texttt{SEAT}} \\
        \cmidrule(lr){5-7} \cmidrule(lr){8-10} \cmidrule(lr){11-13}
        & & & & MSE & $R^2$ & MAE & MSE & $R^2$ & MAE & MSE & $R^2$ & MAE \\
        \midrule
        \multirow{4}{*}{Student Performance} & \multirow{4}{*}{$10000 \times 6$} & \multirow{4}{*}{
            \begin{tabular}{@{}l@{}} 
                learning rate ($\eta$): 0.0001 \\
                bias ($b$): 0.0 \\ 
                weights ($w$): 5 dimensions, all 0 \\ 
                batch size: 64 \\
            \end{tabular}
        }
        & 
        200 & 30.072 & 0.917 & 4.439 & 30.072 & 0.917 & 4.439 & 30.072 & 0.917 & 4.439 \\
        & & & 
        % \rowcolor{gray!25}
        400 & 26.167 & 0.928 & 4.135 & 26.167 & 0.928 & 4.135 & 26.167 & 0.928 & 4.135  \\
        & & & 
        600 & 22.904 & 0.937 & 3.866 & 22.904 & 0.937 & 3.866 & 22.904 & 0.937 & 3.866 \\
        & & & 
        % \rowcolor{gray!25}
        800 & 20.140 & 0.944 & 3.622 & 20.140 & 0.944 & 3.622 & 20.140 & 0.944 & 3.622 \\
        \midrule
        \multirow{4}{*}{Fish Market} & \multirow{4}{*}{$159 \times 7$} & \multirow{4}{*}{
            \begin{tabular}{@{}l@{}} 
                learning rate ($\eta$): 0.0001 \\
                bias ($b$): 0.0 \\ 
                weights ($w$): 6 dimensions, all 0 \\ 
                batch size: 16 \\
            \end{tabular}
        }
        & 2000 & 28946.865 & 0.721 & 141.262 & 28946.865 & 0.721 & 141.262 & 28946.865 & 0.721 & 141.262 \\
        & & & 
        % \rowcolor{gray!25}
        4000 & 27153.971 & 0.738 & 137.225 & 27153.971 & 0.738 & 137.225 & 27153.971 & 0.738 & 137.225 \\
        & & & 6000 & 26344.280 & 0.746 & 134.947 & 26344.280 & 0.746 & 134.947 & 26344.280 & 0.746 & 134.947 \\
        & & & 
        % \rowcolor{gray!25}
        8000 & 25907.904 & 0.750 & 133.422 & 25907.904 & 0.750 & 133.422 & 25907.904 & 0.750 & 133.422 \\
        \bottomrule
    \end{tabular}
    }
    \vspace{1mm}
    \begin{tablenotes}
    \footnotesize
     \item \textbf{Remark.} The training and test sets each comprise half of their total sample count. $\text{MSE} = \frac{1}{n} \sum_{i=1}^{n} (y_i - \hat{y}_i)^2$, $R^2 = 1 - \frac{\sum_{i=1}^{n} (y_i - \hat{y}_i)^2}{\sum_{i=1}^{n} (y_i - \bar{y})^2}$, \\ $\text{MAE} = \frac{1}{n} \sum_{i=1}^{n} |y_i - \hat{y}_i|$ \, ($y_i$: True Value, $\hat{y}_i$: Predicted Value, $\bar{y}$: Mean of True Values).
    \end{tablenotes}
    \end{threeparttable}
    \label{table:PPLinR}
\end{table*}

\begin{figure*}[!ht]
    \centering
    \subfloat[Model training on Student Performance]{
        \includegraphics[width=0.48\linewidth]{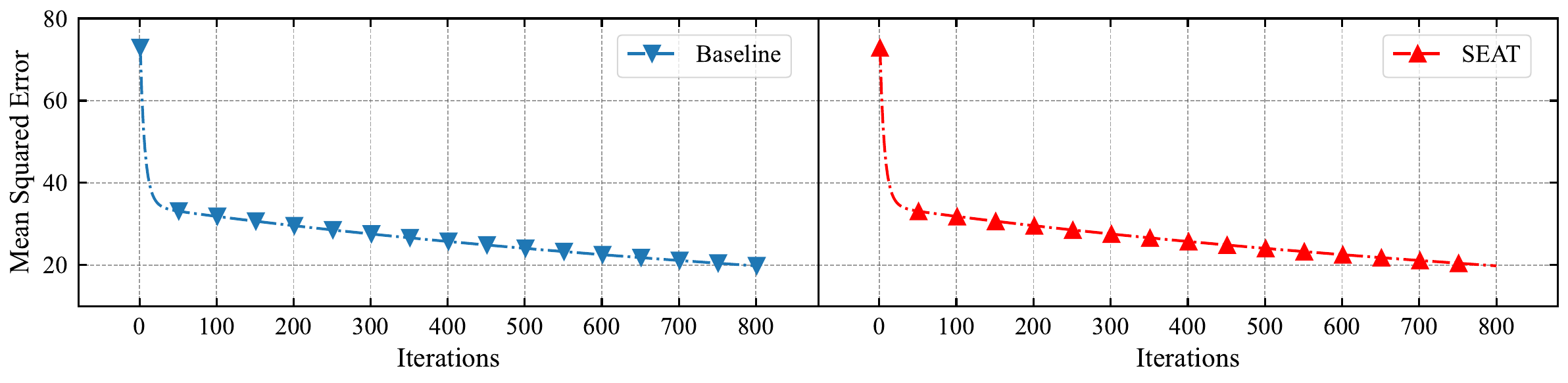} 
    }
    \subfloat[Model training on Fish Market]{
        \includegraphics[width=0.48\linewidth]{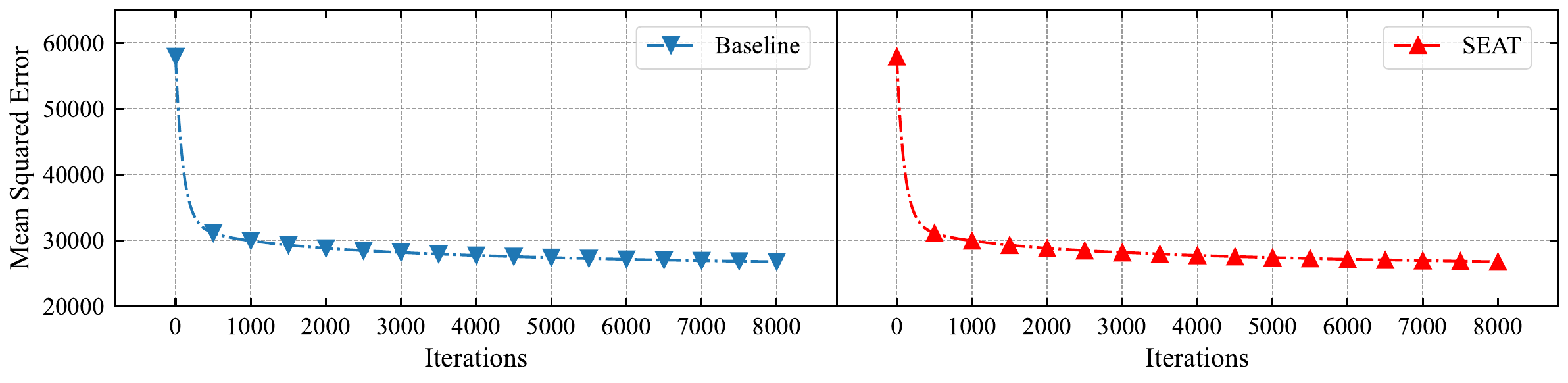} 
    }
    \caption{Gradient descent performance of \texttt{Baseline} and \texttt{SEAT} across iterations.}
    \label{fig:LR_comparison}
\end{figure*}

\section{Conclusion}\label{Conclusion}
In this work, we proposed TRUST, a toolkit for TEE-assisted secure
outsourced computation over integers by seamlessly integrating a (2, 2)-threshold Paillier cryptosystem and TEE. TRUST avoids any collusion attacks from a twin-server architecture of SOC. TRUST not only enriches computational operations, but also improves their performance. Also, we designed a secure data trading solution \texttt{SEAT} based on the proposed TRSUT and confirmed its feasibility. For future work, we will explore TRUST to more complex computation tasks, such as privacy-preserving machine learning inference.

\bibliographystyle{IEEEtran}
\bibliography{IEEEabrv,TRUST}
\begin{IEEEbiography}
[{\includegraphics[width=1in,height=1.25in,clip,keepaspectratio]{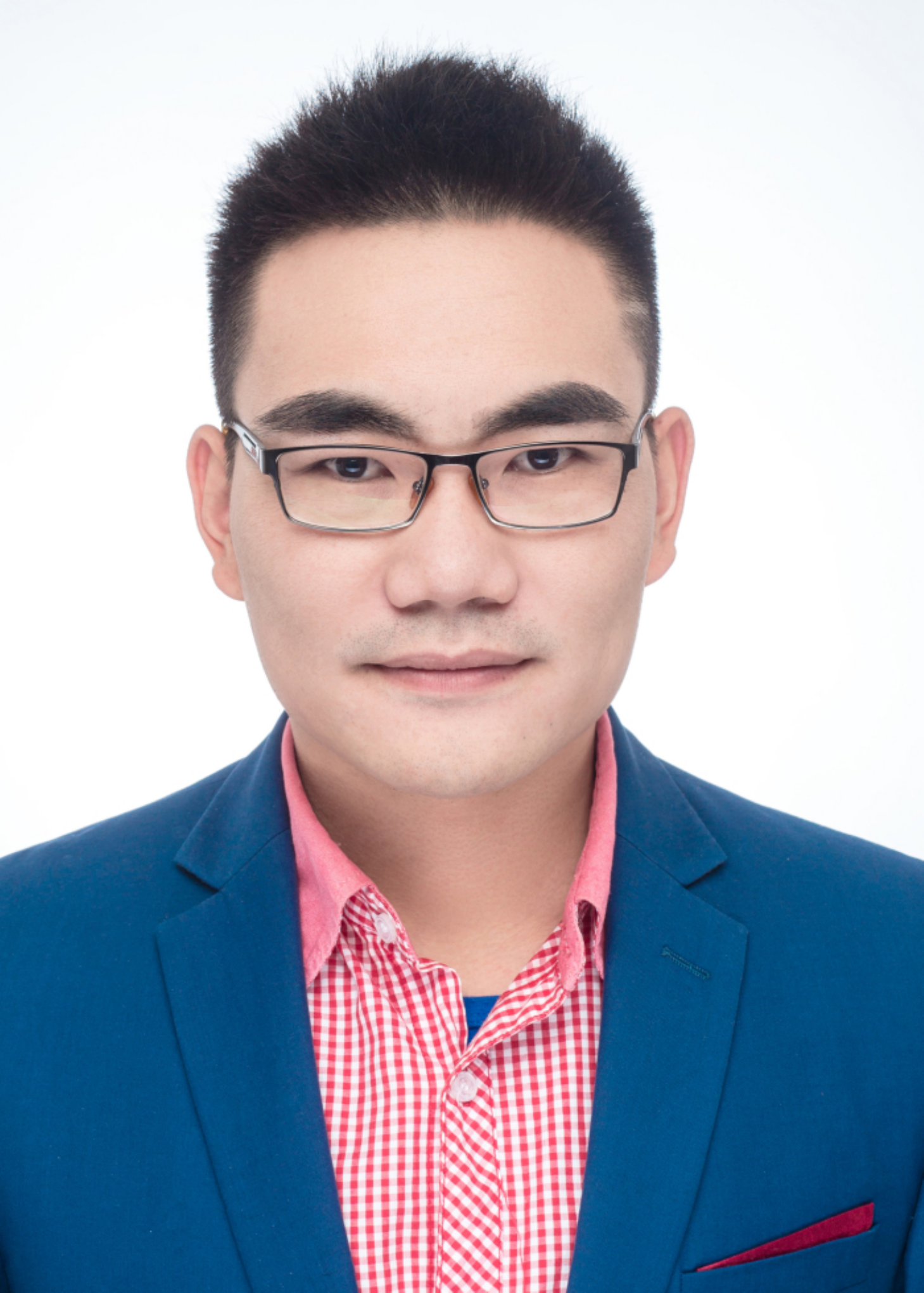}}]{Bowen Zhao}
 (Member, IEEE) received the Ph.D. degree in cyberspace security from the South China University of Technology, China, in 2020. He was a Research Scientist with the School of Computing and Information Systems, Singapore Management University, from 2020 to 2021. He is currently an Associate Professor with the Guangzhou Institute of Technology, Xidian University, Guangzhou, China. His current research interests include privacy-preserving computation and learning and privacy-preserving crowdsensing.
\end{IEEEbiography}

\begin{IEEEbiography}
[{\includegraphics[width=1in,height=1.25in,clip,keepaspectratio]{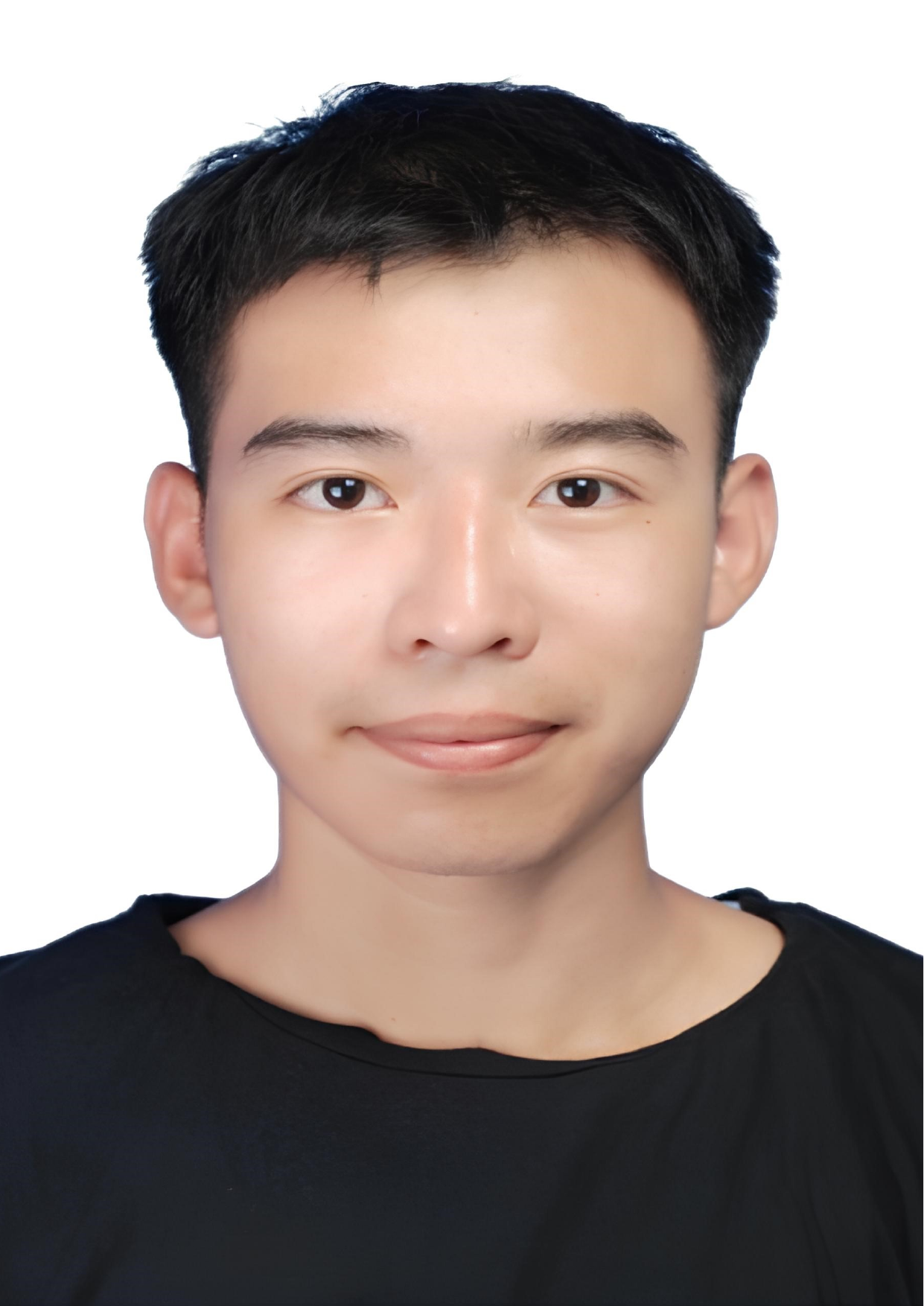}}]{Jiuhui Li}
received the B.S. degree in computer science and technology from Guangzhou University, Guangzhou, China, in 2023. He is currently working toward the M.S. degree in Guangzhou Institute of Technology, Xidian University, Guangzhou. His research interests are trusted computing and secure computation.
\end{IEEEbiography}

\begin{IEEEbiography}
[{\includegraphics[width=1in,height=1.25in,clip,keepaspectratio]{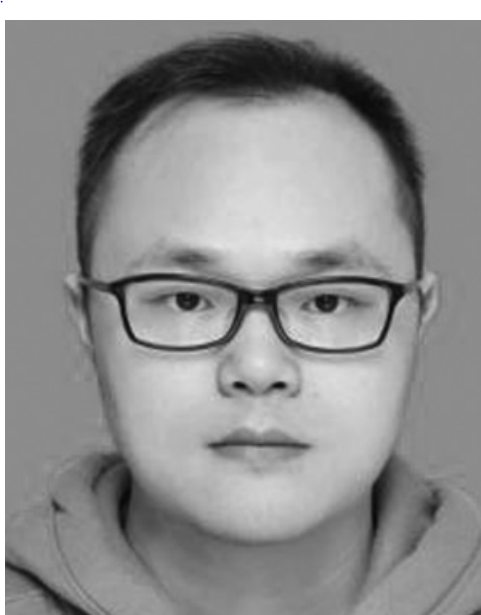}}]{Peiming Xu}
received the B.S. and Ph.D. degrees
from South China University of Technology,
Guangzhou, China, in 2014 and 2019, respectively.
He is currently a Researcher with Electric Power
Research Institute, China Southern Power Grid
Company Ltd., Guangzhou. His research interests
mainly focus on cyber security of new power system,
privacy computing, and post-quantum cryptography.
\end{IEEEbiography}

\begin{IEEEbiography}
[{\includegraphics[width=1in,height=1.25in,clip,keepaspectratio]{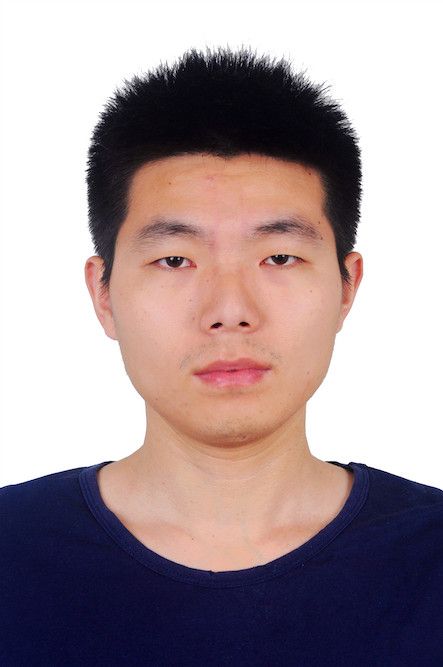}}]{Xiaoguo Li}
received his Ph.D. degree in computer science from Chongqing University, China, in 2019. He was a Research Fellow at Hong Kong Baptist University and Singapore Management University. He is currently an associate professor at Chongqing University, China. His current research interests include trusted computing, secure computation, and public-key cryptography.
\end{IEEEbiography}

\begin{IEEEbiography}
[{\includegraphics[width=1in,height=1.25in,clip,keepaspectratio]{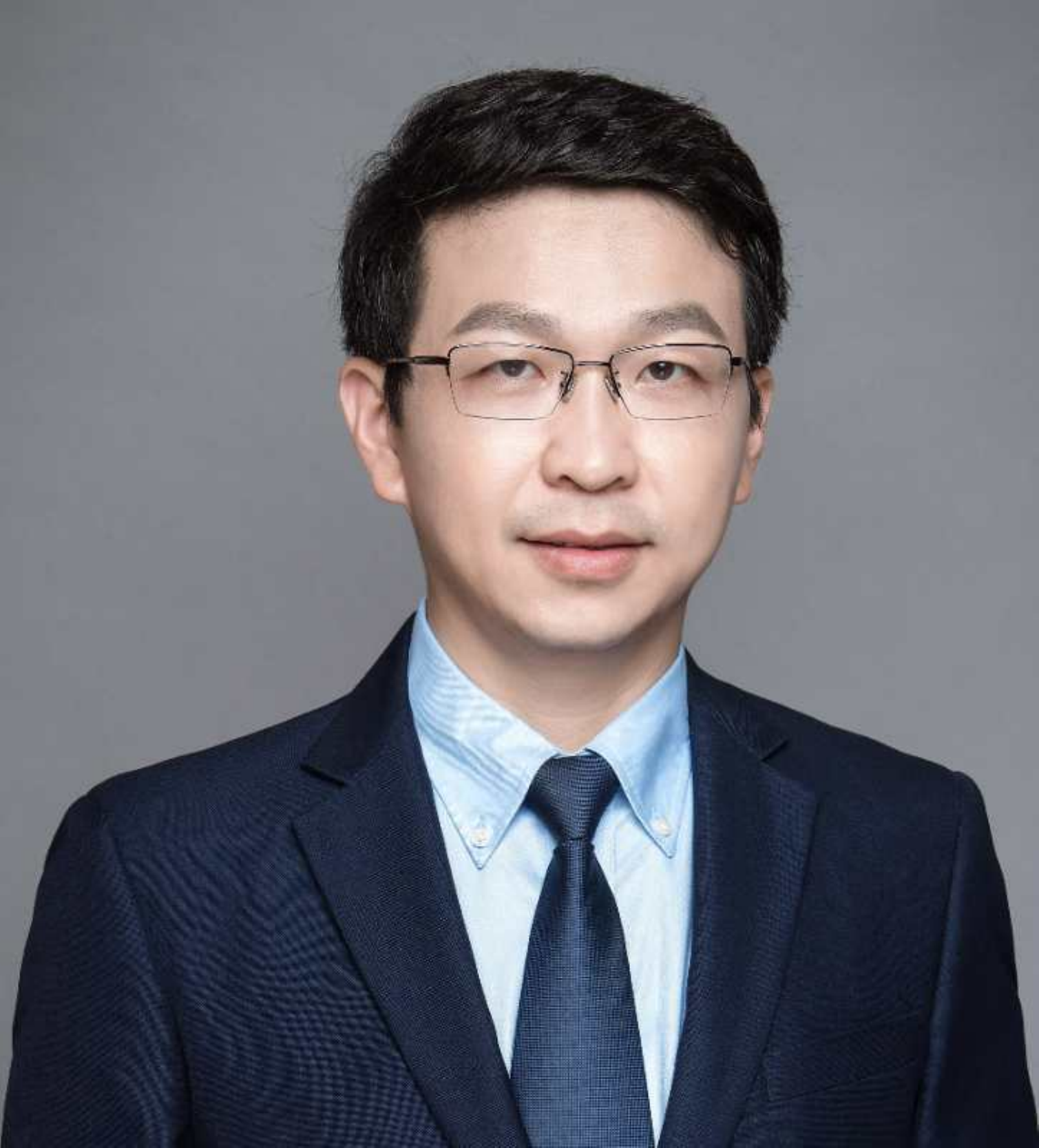}}]{Qingqi Pei}
 (Senior Member, IEEE) received the B.S., M.S., and Ph.D. degrees in computer science and cryptography from Xidian University, in 1998, 2005, and 2008, respectively. He is currently a Professor and a member of the State Key Laboratory of Integrated Services Networks, also a Professional Member of ACM, and a Senior Member of the Chinese Institute of Electronics and China Computer Federation. His research interests focus on digital content protection and wireless networks and security.
\end{IEEEbiography}

\begin{IEEEbiography}
[{\includegraphics[width=1in,height=1.25in,clip,keepaspectratio]{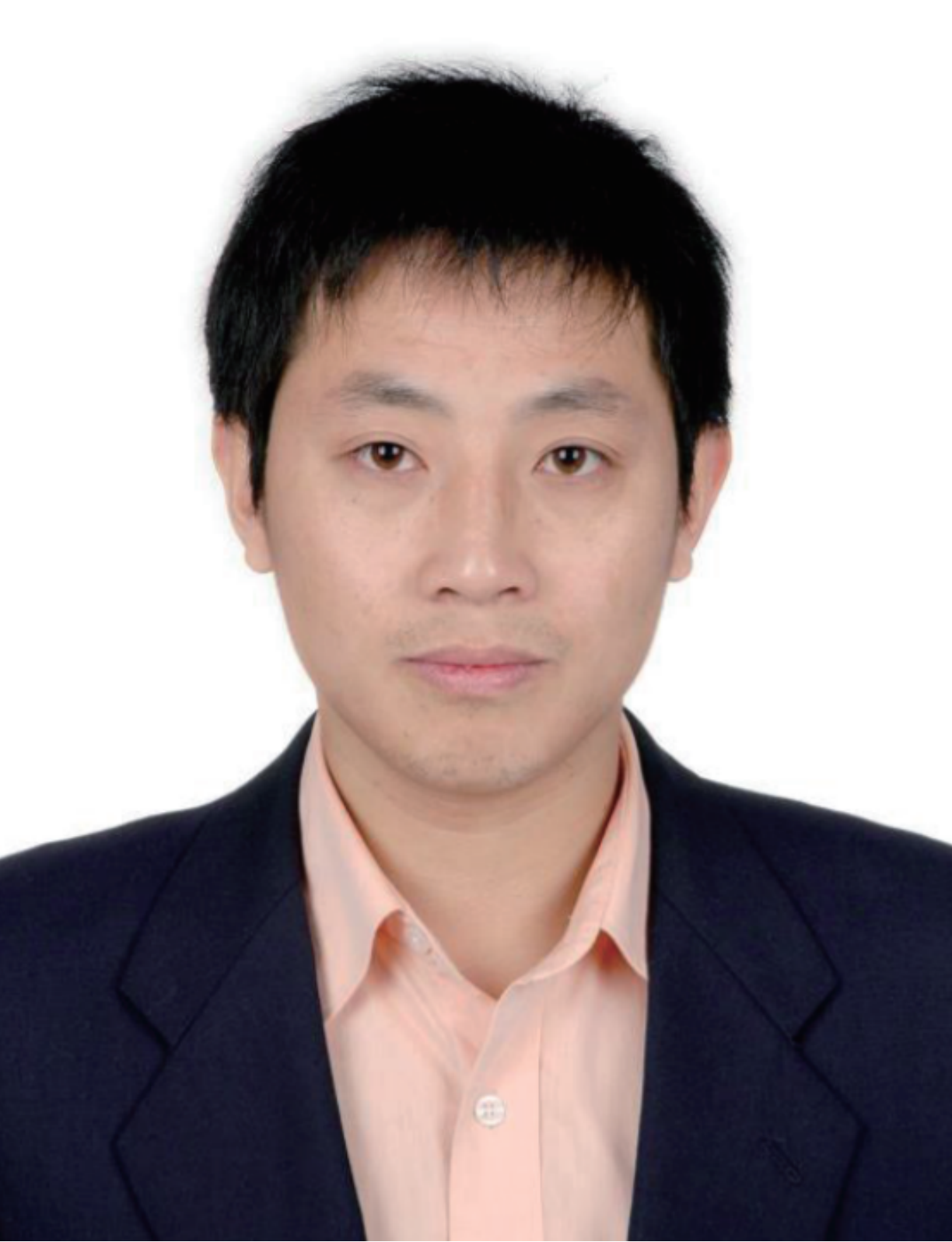}}]{Yulong Shen}
(Member, IEEE) received the B.S. and M.S. degrees in computer science and the Ph.D. degree in cryptography from Xidian University, Xi'an, China, in 2002, 2005, and 2008, respectively. He is currently a Professor with the School of Computer Science and Technology, Xidian University. He is also an Associate Director of Shaanxi Key Laboratory of Network and System Security and a member of the State Key Laboratory of Integrated Services Networks, Xidian University. His research interests include wireless network security and cloud computing security. He has also served on the technical program committees for several international conferences, including ICEBE, INCoS, CIS, and SOWN.
\end{IEEEbiography}

\vfill
\end{document}